\documentclass[12pt]{article}

\usepackage{enumerate}
\usepackage{mathtools, amssymb, latexsym, amsmath, amsfonts, amsthm, color}
\usepackage{array, fancyhdr, bm, listings}
\usepackage{algorithm,comment}
\usepackage{algorithmic}
\usepackage{subcaption}
\usepackage{thmtools, thm-restate}
\declaretheorem{theorem}

\theoremstyle{plain}
\newtheorem{lemma}[theorem]{Lemma}

\newtheorem{corollary}[theorem]{Corollary}

\theoremstyle{definition}

\DeclareMathOperator*{\argmin}{arg\,min}


\setlength{\textheight}{9in}
\setlength{\textwidth}{6.5in}
\setlength{\topmargin}{-0.3in}
\setlength{\headheight}{0.0in}
\setlength{\headsep}{0.4in}
\setlength{\headwidth}{6.5in}
\setlength{\oddsidemargin}{0in}
\setlength{\evensidemargin}{0in}
\setlength{\parskip}{0in}

\newcommand\mc{$\mathsf{MinCut}$ }
\newcommand\reg{\textsc{reg}}
\newcommand\regn{\textsc{regn}}
\newcommand\eregn{\textsc{eregn}}

\providecommand{\keywords}[1]{\textit{Keywords:} #1}
\title{Online MinCut: Competitive and Regret Analysis\footnote{This research was supported in part by the  DTIC contract FA8075-14-D-0002/0007}}

\date{}

\author{Avah Banerjee, Guoli Ding}
\newcommand{\FormatAuthor}[3]{
\begin{tabular}{c}
#1 \\ {\small\texttt{#2}} \\ {\small #3}
\end{tabular}
}
\author{
\begin{tabular}[h!]{lcr}
   \FormatAuthor{Avah Banerjee\footnote{Formerly Indranil Banerjee}}{banerjeeav@mst.edu}{Missouri S\&T\footnote{This research was done while the author was visiting LSU as a postdoc}}
&   \FormatAuthor{Guoli Ding}{ding@math.lsu.edu}{Louisiana State University}
\end{tabular}
}

\begin{document}

\maketitle

\begin{abstract}
In this paper we study the  
mincut problem  in the online setting. We consider two distinct models: A) competitive analysis and B) regret analysis.
In the competitive setting
we consider the vertex arrival model; whenever a new vertex arrives it's neighborhood with respect to the set of known vertices is revealed.
An online algorithm must make an irrevocable decision to determine the 
side of the cut that the vertex must belong to in order to minimize  the  size of the final cut. 
Various models are considered. 1) For classical and advice models we give tight bounds on the competitive ratio of deterministic algorithms.
2) Next we consider few semi-adversarial inputs: random order of arrival with adversarially generated and sparse graphs. 
3) Lastly we derive some structural properties of \mc-type problems with respect to greedy strategies.

Finally we consider a non-stationary regret setting with a variational budget $V_T$ and give tights bounds on the regret function. Specifically, we show that if $V_T$ is sublinear in $T$ (number of rounds) then there is a deterministic algorithm achieving a sublinear regret bound ($O(V_T)$). Further, this is optimal, even if randomization is allowed.
\end{abstract}

\keywords{\textit{competitive analysis, regret analysis, mincut, advice complexity}}

\tableofcontents

\section{Introduction}
In the first part of the paper we consider the online \mc problem under  competitive analysis. In the second part we use a regret model. Lastly, we  study a structural property of \mc and other submodular functions with respect to a greedy order of the ground set.
\subsection{Competitive Analysis}\label{sec: comp models}
Let $\Gamma$ be a (possibly infinite) graph.
An online graph $G=(V,E)$ is a finite subgraph of $\Gamma$ and there is a total order $\pi$ on $V(G)$ or (and) $E(G)$.  We assume $|V(G)| = n$ and $|E(G)| = m$. 
Sometimes $\Gamma$ is not mentioned when we describe a problem because $G$ is allowed to be any finite graph and $\Gamma$ is the disjoint union of all finite graphs (and thus there is no need to mention $\Gamma$).  
In the {\it vertex arrival} model, vertices of $G$ are revealed one at a time according to $\pi$, along with its neighbors in the current set of revealed vertices.
In the {\it edge arrival} model, vertices of $G$ are known and edges arrive one at a time according to $\pi$.
We do not explicitly consider the edge arrival model in this paper.
However some of our results in the vertex arrival model can be extended to the latter setting without much effort. 

Next we introduce some standard notions in competitive analysis \cite{borodin2005online}.
However we frame our discussions in terms of online graph problems.
Let $\mathsf P$ be some graph optimization problem  and let $opt_\mathsf P(G)$ be the optimal value of $\mathsf P$ when the input is $G$.
It is also known as the offline optimal.
Let $\mathbf A_\mathsf P(G, \pi)$ be the output computed by some online algorithm $\mathbf A$ given the ordering $\pi$.
We use competitive analysis to measure the relative performance of $\mathbf A$ with respect to the offline minimum.
Specifically, we say that $\mathbf A$ is $c$-{\it competitive} if for every $G$, 
$$\max_\pi \mathbf A_\mathsf P(G,\pi) \le c\ opt_\mathsf P(G) + d$$ 
for some constant $d \ge 0$.
If $d = 0$ then the algorithm is said to be {\it strictly} $c$-competitive.
The smallest $c$ for which an algorithm is (strictly) $c$-competitive is known as the {\it (strict) competitive ratio}. An algorithm is said to be \emph{competitive} if it is 1-competitive.
For maximization problems the competitive ratio is defined in a similar manner.
It is important to note that the constant $d$ must be independent of $G$ but may depend on $\mathbf{A}$ and $\mathsf P$. We omit the subscript $\mathsf{P}$ whenever the context is clear.

In the adversary model the input $G$ and its order of arrival (be it vertex or edge arrival) are determined by an adversary. 
At each step the algorithm makes an irrevocable decision based on the part of the input seen so far. No other knowledge about the input is known to the algorithm in advance (not even the length). 
This model is used for both deterministic and randomized algorithms \cite{borodin2005online}.
In the deterministic setting, the adversary knows the algorithm (also referred to as the online player) in advance.
For every input sequence the adversary knows the sequence of actions performed by the algorithm.
Hence it is often assumed that adversary creates the entire input then feeds the online algorithm one piece at a time.
However, in the case of randomized algorithms the notion of an adversary is a bit more complex.
Due to usage of random bits, the behaviour of an randomized online algorithm may differ in each run even with the same input sequence.
Informally, the power of an adversary depends on whether they are allowed to look at the current state of the online algorithm before deciding the next input.

Some online models can be considered semi-adversarial or non-adversarial.
They are often characterized in terms of an weak adversary.
For example, in the \emph{advice model}, the online algorithm is supplied with additional information by a benevolent oracle.
Interested readers can refer to \cite{hiller2012probabilistic,dorrigiv2010alternative, koutsoupias2000beyond, dehghani2017stochastic, soto2013matroid} for a more detail overview of these models.
Some of the more well known models  are random-order model (for the matroid secretary problem), diffuse adversary (for paging), Markov process (for paging) etc.
Resource augmentation based models, where the adversary is made weak by giving more ``resources" to the online algorithm can also be thought of as semi-adversarial.
A good example is the $(h,k)$-server problem ($h \le k)$  \cite{bansal2019h}.
Here $h$ is the number of server the adversary is allowed to used to process the requests they generate.
These models are an important alternative to the adversarial models as they attempt to represent real world situations more accurately.
We consider two such models: 1) when the input set is either restricted or is semi-random/ random 2) the algorithm has access to an oracle that knows the  input in advance (among other knowledge).

\subsubsection{Semi-adversarial Inputs}\label{secc: raand order}
In the context of online graph problems, we look at a relevant semi-adversarial model. The arrival order of vertices are chosen uniformly at random.
In this setting we consider two situation:
(1) The graph $G$ is adversarially generated
(2) The graph $G$  comes from a particular family of graphs which is known to the algorithm in advance.
In particular we look at  sparse graphs.

In the random order model we want to determine the the competitive ratio in terms of the expected value of the solution determined by the algorithm.
That is,

$$\mathbb E[\mathbf A_\mathsf P(G,\pi)] =  \frac{1}{n!}\sum_{\pi}\mathbf A_\mathsf P(G,\pi) \le c\ opt_\mathsf P(G) + d$$

The above expectation is over the random permutation $\pi$ and possibly over the random choices made by $\mathbf{A}_P$ .
Since $G$ is not random, the optimal value is not a random variable.
If the input graph is selected according to some distribution then we have use $\mathbb{E}[opt_\mathsf P(G)]$ instead.


\subsubsection{Advice Model}
Advice in the context of online computation is a
model where some information about the future inputs are available to the algorithm.
Its inception is somewhat recent \cite{boyar2016online,dobrev2009measuring}.
The informal idea is as follows. 
The online algorithm is given access to a friendly  oracle which knows the input in advance.
The oracle is assumed to have unlimited computational power.
The algorithm is allowed to ask arbitrary questions to this oracle  at any stage  of the computation.
We do not care about the nature of information received rather the amount, in terms of the number of bits.
This quantity is known as the advice complexity of the algorithm.
Given some online problem $\mathsf P$ we want to determine the lower (upper) bound of the amount of advice needed by any (some)  algorithm to achieve a certain competitive ratio.
This model have been shown to be useful in proving certain lower bounds for online problems.

There are various flavors of advice models, which are more or less equivalent. 
The model we use here is a variant of the \emph{tape model} \cite{hromkovivc2010information}.
Let $\mathsf P$ be some online graph minimization problem.
Let $\mathbf A_\mathsf P^{adv}$ be an algorithm solving $\mathsf P$ which has access to an advice string $adv$.
We say $\mathbf A_\mathsf P^{adv}$ is $c$-competitive with advice complexity $b$ for $\mathsf P$ if there is an advice string $adv$ of size at most $b$ such that:

$$\max_\pi \mathbf A_\mathsf P^{adv}(G,\pi) \le c\ opt_\mathsf P(G) + d$$
Where $d$ is some constant independent of the size of $G$.
The advice complexity $b$ can be a function of the size of $G$, however it is not dependent on $G$ itself.
In the above definition we implicitly assume the length of the advice string is known to the algorithm.
Otherwise we may assume advice strings are self delimiting adding to a $O(\log b)$ overhead.

\subsection{Regret Analysis}
In competitive analysis we are interested in comparing the 
optimal offline solution for the full input sequence with a solution obtained incrementally by an online algorithm making a sequence of irrevocable decisions.
Regret analysis, in contrast, is often used in domain of online decision making and learning.
At a high level at each step we play a move from a set of feasible actions and we  receive a feedback\cite{hazan2012online}.
Depending on the problem this feedback may directly or indirectly specify the loss we incur after playing the action.
In the regret setting we are interested in measuring the total loss relative to an  algorithm whose moves are determined in hindsight\footnote{This algorithm need not be optimal. Different assumptions leads to different notions of regret \cite{jegelka2011online}}.
Online minimization problems are a natural class of problems to study in the regret settings.
At each time step $t$ an online algorithm plays a feasible solution $x_t$. Then it receives a feedback $f_t(x_t)$.
Depending on the model the feed back mechanism can be explicit or implicit.
For example instead of the function value we may receive the gradient of the function at $x_t$.
In the context of the \mc problem, at each time step, after the online algorithm chooses a cut the adversary will supply a new weight function $w_t$.
We consider the full information setting where this weight function is fully specified after the algorithm has made its choice.
Next we define our regret measure.
Early studies on regret based learning primarily focused on stationary regret, which is defined as follows \cite{kalai2005efficient, hazan2012online, jegelka2011online}.
\begin{align}
    \reg(\mathbf{A}) = \sum_{t=1}^T{f_t(x_t)}- \min_{x \in \mathcal{X}} \sum_{t=1}^T{f_t(x)}
\end{align}
Here $\mathcal{X}$ is the feasible set and $x^* = \argmin_{x \in \mathcal{X}} \sum_{t=0}^T{f_t(x)}$ is a solution that minimizes the cumulative cost.
Intuitively, this measures the cost $\mathbf{A}$ incurs while choosing a different solution at each step instead of choosing a single solution in hindsight.
One of the goal in this model is to determine for a particular problem if  there is any \emph{Hannan-consistent}  algorithm. Such an algorithm exhibits a sublinear regret in $T$, the number of rounds. 
This implies that eventually the solutions obtained by the algorithm ``converge" to the best compromised offline solution.
The online MinCut problem in this setting can be thought of as a constrained convex (in fact linear) optimization on the space of the characteristic vectors corresponding to the cuts.
It was shown in \cite{lugosi2009online,jegelka2011online} that the online minimum cut problem has sublinear regret under the stationary regret measure as above, even if the feedback function is submodular.

However, the stationary regret measure can me limiting.
For problems like the \mc, if the weight function changes at each step, it is easy to see that no single cut will be close the minimum value for each of the individual weight functions.
Indeed, many recent studies have focused on various forms of adaptive or non-stationary regret measure \cite{gao2018online, besbes2015non, roy2019multi}.
In the non-stationary setting we compare the online decisions against the best decisions for the corresponding time steps:

\begin{align}\label{eq:non  reg}
    \regn(\mathbf{A}) = \sum_{t=0}^T{f_t(x_t)}- \sum_{t=0}^T {f_t(x_t^*)}
\end{align}
where $x^*_t = \argmin_{x \in \mathcal{X}}{f_t(x)}$.
Clearly, $\reg(\mathbf{A}) \le \regn(\mathbf{A})$. In fact without any restrictions on the input sequence $(f_1,\ldots,f_T)$ the regret will be linear\cite{besbes2015non}.
Generally some variational bound is proposed for the input sequence (also known as the variational budget):
\begin{align}\label{eq: variation}
    \mathcal{F}_T = \left\{(f_1,\ldots,f_T)\mid \sum_{t=1}^{T-1}{||f_t - f_{t+1}|| \le V_T}\right\}
\end{align}
where the norm $||\cdot||$ can be realized by different metrics (usually it is the Minkowski norm).   $\mathcal{F}_T$ is the collection of such input sequences; known as the uncertainty set.
For some restricted classes of non-convex functions there have been some promising results recently with $O(\sqrt{T+TV_T})$ regret in the non-stationary setting\cite{gao2018online}. This is sublinear if $V_T$ is.
The regret model is usually considered with respect to a randomized algorithm. 
We can think of this as a repeated game where the online player chooses a mixed strategy and the adversary chooses a feedback.
Equation \ref{eq:non  reg} can be modified in the randomized setting as follows:
\begin{align}\label{eq:rand non  reg}
    \eregn(\mathbf{A}) = \mathbb{E}\left[\sum_{t=0}^T{f_t(x_t)}- \sum_{t=0}^T {f_t(x_t^*)}\right]
\end{align}
where the expectation is taken over the random variables $x_t$ and possibly over any randomness in the sequence $f_t$.


\subsection{Greedy property of online submodular functions}

For many online problems the classical worst case model yields pessimistic results.
A review of some well known alternatives can be found here \cite{hiller2012probabilistic,dorrigiv2010alternative}.
However there are online problems, particularly in the minimization setting, where these model fail to distinguish the hardness of these problems. 
Graph problems, such as finding the mincut, min-degree, minimum spanning tree, minimum dominating set (discussed later) are good examples of online problems which are considered hard even with many beyond-worst-case measures.

With this in mind we look at the following measure to evaluate the hardness of some online problems on graphs. 
An extension of this idea can also be used to compare the ``robustness'' of different online algorithms even if their worst case performance are indistinguishable. 
At a high level we classify problems based on whether there is a ``good ordering" of the inputs for every possible input graphs such that we can always find an optimal output using a fixed (necessarily greedy) strategy.
Let $\mathbf A$ be an online algorithm for a graph minimization problem $\mathsf P$ under the vertex arrival model. The input of $\mathbf A$ is a permutation $\pi(V(G)) =v_1,v_2, \cdots, v_n$ of a graph $G$. To measure the performance of $\mathbf A$, we usually consider $\max \mathbf A(G, \pi)$ over all permutations $\pi$ in the adversary model. Similarly for the random order model with adversarially generated input we are interested in the average $\sum \mathbf A(G,\pi)/n!$. Along this line, one natural question we may ask is, what is $\min \mathbf A(G,\pi)$, over all $\pi$? 

For many problems $\mathsf P$, it is easy to construct $\mathbf A$ such that $\min_\pi \mathbf A(G,\pi) = opt (G)$ holds for all graphs $G$. For instance, if $\mathsf{P} =\mathsf {MinCov}$ is the minimum vertex-cover problem, then it is clear that the following $\mathbf A$ satisfies the requirement: placing $v_i, v_{i+1},...,v_n$ in the cover, where $i$ is the largest index such that $v_1,v_2,...,v_{i-1}$ is an independent set. To see that $\min_\pi \mathbf A(G,\pi) = opt (G)$ we only need to take a maximum independent set $I$ and define $\pi$ to be a permutation that first lists all vertices of $I$ and then vertices of $V\backslash I$.

However, there are also problems for which no matter what $\mathbf A$ is, $\min_\pi \mathbf A(G,\pi)$ is different from $opt(G)$ for at least one graph $G$. For instance, consider the minimum domination problem $\mathsf{P} = \mathsf {MinDom}$: find a smallest set $D$ of vertices of $G$ such that every vertex outside $D$ is adjacent to at least one vertex inside $D$. Then $\mathsf P$ is such a problem. Suppose otherwise that $\mathbf A$ satisfies $\min_\pi \mathbf A(G,\pi)=opt(G)$ for all $G$. Then 
\begin{enumerate} 
\item $\mathbf A$ must place $v_1$ in $D$ because $G$ might have only one vertex. In general, if $v_1,...,v_k$ is independent then $\mathbf A$ has to place all of them in $D$.  
\item If $v_1$ is adjacent to $v_2$ then $\mathbf A$ must place $v_2$ outside $D$ because $G$ might be $K_2$. In general, if $v_1$ is adjacent to $v_2,...,v_k$ and $\{v_2,...,v_k\}$ is independent then $\mathbf A$ must place $v_2,...,v_k$ outside $D$ since $G$ might be $K_{1,k-1}$.
\end{enumerate} 
Now let $H$ be the tree with five edges $13, 23, 34, 45$ and $46$. Then $H$ has a unique minimum dominating set $\{3,4\}$. If $\pi$ is a permutation so that $\mathbf A(H,\pi)=\{v_3,v_4\}$, by (1) above we may assume $v_1=3$. Then (2) implies a contradiction. So, no matter what $\mathbf A$ is, $\mathbf A(H,\pi)\ne opt(H)$ for all $\pi$.

The above two examples show the two extremes concerning $\min_\pi \mathbf A(G, \pi)$. In this paper we establish that if $\mathsf P$ is $\mathsf{MinCut}$ then there exists an algorithm with $\min_\pi \mathbf A(G, \pi)=opt(G)$ holds for all $G$. We extend our results to other graph optimization problems such as online maxcut and sub-modular function maximization \cite{streeter2009online}.

It is worth pointing out that the importance of our result is not the construction of an  algorithm $\mathbf A$. What's important about our result is that it reveals the structural difference between $\mathsf{MinCut}$ and problems like $\mathsf{MinDom}$. It illustrates that at least one optimal solution of $\mathsf{MinCut}$ can be identified in the online fashion. The result is more about the structure of $\mathsf{MinCut}$ than about algorithm $\mathbf A$.



\section{Results Under Competitive Analysis}
\subsection{Problem Definition and Notations}

Let $G=(V,E)$ be a graph. For any disjoint subsets $X,Y\subseteq V$, we denote by $E(X,Y)$ the set of all edges of $G$ that are between $X$ and $Y$. A {\it partition} of $V$ is a pair $(X,Y)$ of disjoint subsets of $V$ with $X\cup Y=V$. A {\it cut} of $G$ is a set $C\subseteq E$ that can be expressed as $E(X,Y)$ for a partition $(X,Y)$ of $V$ with $X\ne\emptyset \ne Y$. Note that every graph with two or more vertices must have at least one cut. 

The {\it minimum cut} problem ($\mathsf{MinCut}$) is to minimize $|C|$ over all cuts $C$ of $G$. Note that the minimum is finite for all $G$ with two or more vertices, and the minimum is $\infty$ if $|V(G)|=1$ since we are minimizing over the empty set. For a graph with a positive edge weights $w : E \to \mathbb R^+$ the problem ($\mathsf{MinCut}^+$) is to minimize $w(C)$, where $w(C) = \sum_{e\in C} w(e)$. 

Let $\cal G$ be a class of graphs. All graphs considered in the paper are simple. 
By $\mathsf{MinCut}[\cal G]$ we denote the problem $\mathsf{MinCut}$ with its input limited to graphs in $\cal G$.
According to our definition (in section~\ref{sec: comp models}), an online algorithm $\mathbf A$ for $\mathsf{MinCut}[\cal G]$ is called $c$-{\it competitive} if there exists a constant $d$ (which may depend on $\cal G$) such that for all $G \in {\cal G}$,
$$\max_\pi\mathbf A(G, \pi)\le c\cdot opt(G)+d.$$ 

\noindent For any integer $k\ge0$, let $\mathcal G_k$ denote the class of $k$-edge-connected graphs. Equivalently, $\mathcal G_k$ consists of all graphs $G$ with $opt(G)\ge k$. 
In addition, every graph in $\mathcal G_k$ has at least $k+1$ vertices.  
\noindent We use $\mathcal{G}_{(n)}$ to denote an infinite collection of graphs. The  collection contains graphs of size $n$ whenever $n$ is sufficiently large.

We consider the minimum cut problem in the advice model as follows.
The input, which is generated by the adversary,  is a graph $G$ together with a total order $\pi$ on its vertices. We denote the vertices, under $\pi$, by $v_1,v_2,...,v_n$ throughout our discussion. 
A partial input sequence $(v_1,\ldots,v_i)$ is termed as a prefix sequence.
By symmetry we assume $v_1\in X$. The algorithm may choose to ask questions even before $v_1$ is revealed. Since the placement of $v_1$ is fixed, it does not matter if these questions are asked before or after $v_1$ is revealed. To be consistent with all other steps, we assume that $\mathbf A$ does not ask anything before $v_1$ is revealed. So the process goes as follows: 

\noindent Step 1: $v_1$ is revealed and is placed in $X$. \\ 
Step 2: $v_2$ is revealed, then $\mathbf A$ asks a question and gets an answer, then $v_2$ is placed in $X$ or $Y$. \\ 
Step 3: $v_3$ is revealed, then $\mathbf A$ asks a question and gets an answer and so on.

At the $i$th $(i>1$) step of the computation, $\mathbf A$ has placed $v_1,...,v_{i-1}$ in $X$ or $Y$ already, $v_i$ is just revealed, and $\mathbf A$ needs to decide where to place $v_i$. At this point, $\mathbf A$ will ask a question about $(G,\pi)$, with the knowledge of $G[v_1,...,v_i]$ (the subgraph of $G$ induced on $v_1,...,v_i$) and possibly other information about $(G,\pi)$ that was obtained by $\mathbf A$ from the previous inquires. We define $\Gamma_i$ as the collection of potential inputs $G$ after seeing the first $i$ vertices.
A partition $(X_i,Y_i)$ of $\{v_1,...,v_i\}$ is called {\it extendable} if it can be extended into an optimal solution.

\subsection{Related Work}
To the best of our knowledge  $\mathsf{MinCut}$ and its other siblings (like min-bisection)  have not been studied in the competitive analysis setting.
In contrast there have been few results related to $\mathsf{MaxCut}$.
The folklore randomized $2$-approximation for the offline $\mathsf{MaxCut}$ also works in the online setting. 
In \cite{bar2012online} authors gave a almost tight bound of $3\sqrt{3}/2$ for the competitive ratio of the maximum directed cut problem under the vertex arrival model.

Few other studies have been made for online minimization problems on graphs.
Two important problems in this area are online minimum spanning tree  and coloring\cite{halldorsson1994lower, forivsek2012advice, bartal1996lower}. 
For the minimum spanning tree problem generally the edge arrival model is used. 
In \cite{remy2007online} authors study this problem when the edge weights are selected uniformly at random from $[0,1]$.
More recently this problem has been studied in the advice setting \cite{bianchi2018online}.

\subsection{Adversarial Input and Advice Complexity}
\begin{restatable}{theorem}{mincutclass}
\label{thm: classic}
(i) Let $\mathbf A$ be an online algorithm for $\mathsf{MinCut}[{\cal G}_{k}]$, where $\mathbf A$ knows $n$ in advance. Then the following hold. \\ 
\indent (a) If $k=0$ then $\mathbf A$ is not $c$-competitive for any $c$.\\ 
\indent (b) If $k\ge1$ and $\mathbf A$ is $c$-competitive then $c\ge \frac{n-p}{k}$ for some $p\ge1$. \\ 
(ii) Suppose $k\ge1$. Then there exists an online algorithm $\mathbf A$ for $\mathsf{MinCut}[{\cal G}_{k}]$, where $\mathbf A$ does not know $n$ in advance, such that $\mathbf A$ is $\frac{n-p}{k}$-competitive for all $p\ge1$.
\end{restatable}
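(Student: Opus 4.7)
The proof decomposes into three parts, which I would address in the order (ii), (i)(b), (i)(a), because the trivial algorithm for (ii) and the bipartite adversary for (i)(b) together power the argument for (i)(a).

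For (ii), consider the trivial online algorithm $\mathbf{A}$ that places $v_1$ into $X$ and every subsequent vertex into $Y$; it requires no knowledge of $n$. Its output cut equals $\deg_G(v_1) \le n-1$. Since $G \in \mathcal{G}_k$ implies $opt(G) \ge k$, for every $p \ge 1$ one has
\[
\mathbf{A}(G,\pi) \;\le\; n-1 \;=\; \tfrac{n-p}{k}\cdot k + (p-1) \;\le\; \tfrac{n-p}{k}\cdot opt(G) + (p-1),
\]
so $\mathbf{A}$ is $\tfrac{n-p}{k}$-competitive with additive constant $d_p = p-1$.

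For (i)(b), given any $c$-competitive $\mathbf{A}$ with additive $d$, I would take $G = K_{k,n-k}$, which lies in $\mathcal{G}_k$ with $opt(G) = k$. The adversary reveals the $n-k$ vertices of the larger side first as isolated vertices; $\mathbf{A}$ commits to a split $(a,b)$ with $a+b = n-k$. Then the remaining $k$ vertices arrive one by one, each adjacent to all previously revealed vertices (consistent with $K_{k,n-k}$'s bipartite structure). A case analysis on $\min(a,b)$ yields $\mathbf{A}(G,\pi) \ge n-p_0$ for some $p_0 \ge 1$ depending on $\mathbf{A}$: when $\min(a,b)=0$, partition validity forces at least one of the $k$ later vertices onto the opposite side, contributing cut $n-k$; when $\min(a,b)\ge 1$, each of the $k$ later vertices contributes at least $\min(a,b)$. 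Combining with $c\cdot opt(G) + d \ge \mathbf{A}(G,\pi)$ gives $c \ge \tfrac{n-p}{k}$ with $p = p_0+d \ge 1$.

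For (i)(a) with $k=0$, I distinguish on $\mathbf{A}$'s behaviour on all-isolated input. If $\mathbf{A}$ never places a vertex into $Y$ on iso input, it fails to produce a valid partition on the empty graph (where $opt=0$), and is non-competitive. Otherwise let $t=t(n)$ be the first index placed into $Y$ on iso input of length $n$. If $t$ is bounded by a constant $T$ independent of $n$, the adversary reveals $v_1,\ldots,v_t$ isolated (so $v_1,\ldots,v_{t-1}\in X$ and $v_t\in Y$), then $v_{t+1},\ldots,v_{n-1}$ each adjacent only to $v_1$ and $v_t$, and finally $v_n$ isolated. The resulting graph has $\{v_n\}$ as an isolated component, so $opt(G)=0$, while each of the forced-placement vertices contributes one cut edge between its neighbours $v_1\in X$ and $v_t\in Y$, giving $\mathbf{A}(G,\pi)\ge n-t-1$; since $t\le T$ this is unbounded in $n$, contradicting any $c$-competitive bound (which would require $\mathbf A \le d$ when $opt=0$). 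If $t(n)$ grows with $n$, $\mathbf{A}$'s iso output has $|Y|$ small relative to $|X|$, and the star $K_{1,n-1}$ (leaves first) construction from (i)(b) either forces $v_n$ into $Y$ (yielding cut $n-1$ against $opt=1$) or yields unbounded cut via a disconnected augmentation analogous to the bounded-$t$ construction.

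The main obstacle is the case analysis in (i)(a), particularly ensuring that $\mathbf{A}$'s placements during the edge-bearing phase match what we predicted from its isolated-input placements; this is achieved by revealing edges only after $\mathbf{A}$ has committed on the isolated prefix, so that the relevant portion of its strategy is fixed by the time the adversary reveals the forcing structure. Parts (ii) and (i)(b) are then direct calculations once the bipartite adversary is in place.
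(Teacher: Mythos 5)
Your part (ii) is essentially the paper's argument (same trivial algorithm, same identity $n-1=\frac{n-p}{k}\cdot k+(p-1)$), with one small slip: the step $\frac{n-p}{k}\cdot k\le\frac{n-p}{k}\cdot opt(G)$ reverses when $n<p$, since $n-p<0$ and $opt(G)\ge k$; the constant $d$ must be independent of $G$, so graphs with $n\le p$ must be covered too. The paper repairs exactly this by taking the larger constant $d=(p-1)+\frac{(p-1)^2}{k}$ and treating $n\le p$ separately. That is cosmetic. The real problem is (i)(b): your $K_{k,n-k}$ adversary does not force a large cut. After the $n-k$ independent vertices are split $(a,b)$, each of the last $k$ vertices contributes only $\min(a,b)$ to the cut, so the total is $k\cdot\min(a,b)$, not $n-p_0$. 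The algorithm that places exactly one of the first $n-k$ vertices in $X$ and everything else in $Y$ outputs a cut of size exactly $k=opt(G)$ on your instance, so no lower bound on $c$ follows. The paper's construction is different and does work: take $K_{n-1}$ plus a vertex $z$ joined to $k$ clique vertices, reveal two \emph{adjacent} vertices first, and exploit the ambiguity of whether one of them is $z$ --- either the algorithm separates two clique vertices, or it co-locates $z$ with the clique and a nonempty $Y$ later forces the clique to be split; in both cases the cut is at least $n-2$ against $opt(G)=k$.

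For (i)(a), your bounded-$t$ construction (isolated prefix, then vertices adjacent to one $X$-vertex and one $Y$-vertex, then a final isolated vertex making $opt=0$) is a valid argument and genuinely different from the paper's. But the unbounded-$t$ branch is only an ``either\ldots or'' sketch, and it leans on the star construction from your (i)(b), which as noted does not establish the required bound; so that case is open in your write-up. The paper avoids the dichotomy entirely with a two-vertex adaptive reveal on $K_{n-1}\backslash e$ plus an isolated vertex: whether the algorithm merges or separates the first two (nonadjacent) vertices, the adversary names them so that the near-clique must be split, giving cut at least $n-3$ while $opt(G)=0$. I would either adopt that construction or complete your second case with an explicit forcing gadget that does not depend on (i)(b).
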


\begin{proof}
\noindent (a) Let $G$ be obtained from $K_{n-1}\backslash e$ (where $n\ge4$ and $e=xy$) by adding an isolated vertex $z$. The adversary first reveal two nonadjacent vertices $v_1,v_2$. If $\mathbf A$ places $v_1,v_2$ in the same part of the partition, then the adversary can declare $v_1=x$ and $v_2=z$. In this case $\mathbf A(G,\pi)\ge n-3$. If $\mathbf A$ places $v_1,v_2$ in different parts of the partition then the adversary can declare $v_1=x$ and $v_2=y$. In this case $\mathbf A(G,\pi)\ge n-3$ holds again. If $\mathbf A$ is $c$-competitive, then there exists a number $d$ independent of $G$ and $\pi$ such that $\mathbf A(G,\pi)\le c\cdot opt(G)+d$ holds for all our $G$ and $\pi$. It follows that $n-3\le c\cdot 0+d$ holds for all $n\ge4$. This is impossible and thus $\mathbf A$ is not $c$-competitive for any $c$.
 
(b) Since $\mathbf A$ is $c$-competitive, there exists a constant $d$ satisfying $\mathbf A(G,\pi)\le c\cdot opt(G)+d$ for all $G\in\mathcal G_k$ and all $\pi$ on $G$. Without loss of generality, we assume $d\ge0$. Let $G$ be obtained from $K_{n-1}$ ($n>k$) by adding a new vertex $z$ and joining it to $k$ vertices of $K_{n-1}$. Then $opt(G)=k$ and thus $G$ belongs to $\mathcal G_k$. The adversary first reveal two adjacent vertices $v_1,v_2$. If $\mathbf A$ places $v_1,v_2$ in the same part of the partition, then the adversary can declare $v_1=z$. In this case $\mathbf A(G,\pi)\ge n-2$. If $\mathbf A$ places $v_1,v_2$ in different parts of the partition then the adversary can declare that neither $v_1$ nor $v_2$ is $z$. In this case $\mathbf A(G,\pi)\ge n-2$ holds again. Let $p=d+2$. Then $p\ge1$. In addition, $n-2\le c\cdot k+d$, implying $c\ge \frac{n-p}{k}$, as required. 

(ii) Let $\mathbf A$ be the following simple online algorithm for $\mathsf{MinCut}[{\cal G}_{k}]$: placing the first revealed vertex in the first part of the partition and all other vertices in the second part of the partition. Note that $\mathbf A$ does not need to to know $|G|$ in advance. We now prove that $\mathbf A$ is $\frac{n-p}{k}$-competitive for all $p\ge1$. To do so, we choose $d=(p-1) + \frac{(p-1)^2}{k}$ and we show that $\mathbf A(G,\pi)\le \frac{n-p}{k} \cdot opt(G) +d$ holds for all $G\in \mathcal G_k$ and all $\pi$ on $G$, which will prove (ii). We consider two cases.

If $n\le p$ then $\mathbf A(G,\pi)\le n-1 \le p-1 \le (p-1) + \frac{(p-1)^2}{k} + \frac{n-p}{k}\cdot (n-1) = \frac{n-p}{k}\cdot (n-1) + d\le \frac{n-p}{k}\cdot opt(G) + d$. 

If $n>p$ then $\mathbf A(G,\pi)\le n-1 = \frac{n-p}{k}\cdot k + (p-1)\le  \frac{n-p}{k}\cdot opt(G) + (p-1) \le  \frac{n-p}{k}\cdot opt(G) +d$. 

\noindent Thus (ii) is verified.

\end{proof}

The above results stands in contrast to the one for the online $\mathsf{MaxCut}$ problem.
In the case of online minimization problems like mincut making a single mistake can prove to be costly.
Can advice help? 
There are two interesting cases to consider.
One where we want to find the optimal cut and the other where an approximate value would suffice.
As it turns out the advice complexity of these two problems are more or less the same.
This is  a bit surprising as there are $\mathsf{AOC}$-complete online problems for which this is not the case.
Here $\mathsf{AOC}$ stands for \emph{asymmetric online cover} which was introduce in \cite{boyar2015advice}.
For problems in this class a $c$-competitive algorithm requires $\Omega(n/c)$-bits of advice and this is tight.
However, the above results for \mc are pessimistic. 
The following two theorems gives the advice complexity for optimality.

\begin{restatable}{theorem}{upnminus}\label{thm:up n-1}
There is an competitive algorithm that finds a minimum cut with $n-1$ bits of advice.
\end{restatable}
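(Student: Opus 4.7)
The plan is to have the oracle explicitly encode an optimal cut, one bit per vertex starting from $v_2$. First I would let the oracle, which knows the entire input $(G,\pi)$, fix an optimal cut $(X^*, Y^*)$ of $G$ and choose the labeling so that $v_1 \in X^*$; this relabeling is permissible because a cut is symmetric under swapping its two sides, which matches the convention $v_1 \in X$ stated in the problem definition. The advice string is then $a_2 a_3 \cdots a_n \in \{0,1\}^{n-1}$, where $a_i = 0$ if $v_i \in X^*$ and $a_i = 1$ otherwise.

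The algorithm $\mathbf{A}$ proceeds as follows. On the arrival of $v_1$ it places $v_1$ in $X$ and consumes no bit. For each $i \ge 2$, when $v_i$ arrives $\mathbf{A}$ reads the single bit $a_i$ from the advice tape and places $v_i$ in $X$ or $Y$ according to whether $a_i = 0$ or $a_i = 1$. After $v_n$ is processed the output partition is exactly $(X^*, Y^*)$, so $\mathbf{A}(G,\pi) = opt(G)$. Taking $c = 1$ and $d = 0$ in the definition of $c$-competitiveness verifies that $\mathbf{A}$ is competitive in the sense of the paper. Exactly one bit is consumed per step for $i \ge 2$, so the total advice length is $n-1$, and $\mathbf{A}$ does not need to know $n$ in advance in order to know how much of the tape to read at each round.

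There is essentially no structural obstacle here: the content of the theorem is a calibration of the advice budget, not a hardness result. The only small point I would be careful about is the convention $v_1 \in X^*$, which is absorbed by the oracle via the relabeling of the two sides of the cut and costs no extra bit. A secondary sanity check I would perform is that $(X^*,Y^*)$ is indeed a valid cut, i.e.\ both sides are nonempty; this is automatic since $\mathsf{MinCut}$ is defined only for $|V(G)| \ge 2$ and $(X^*,Y^*)$ is required by definition to satisfy $X^*\ne\emptyset\ne Y^*$.
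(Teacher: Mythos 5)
Your proposal is correct and is essentially the paper's own argument: the paper has the algorithm ask, for each $i\ge 2$, the one-bit question ``is $(X_{i-1}\cup\{v_i\},Y_{i-1})$ extendable to an optimal solution?'', and the sequence of answers is exactly your advice string encoding which side of a fixed optimal cut $v_i$ lies on (with $v_1\in X$ by the symmetry convention). The only cosmetic difference is that you phrase the advice as a static tape written in advance while the paper phrases it as adaptive yes/no queries; the information content and the $n-1$ bit count are identical.
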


\begin{proof}
Let $\mathbf A^{adv}$ define $X_1=\{v_1\}$ and $Y_1=\emptyset$ when it receives $v_1$. For each $i=2,...,n$, suppose $X_{i-1}$ and $Y_{i-1}$ have been constructed. When $v_i$ is revealed $\mathbf A^{adv}$ asks: is $(X_{i-1}\cup \{v_i\},Y_{i-1})$ extendable? If the answer is yes then set $X_i=X_{i-1}\cup \{v_i\}$ and $Y_i=Y_{i-1}$; if the answer is no then set $X_i=X_{i-1}$ and $Y_i=Y_{i-1}\cup \{v_i\}$. At the end, $\mathbf A^{adv}$ finds an optimal solution with $n-1$ bits of advice.
\end{proof}

The algorithm correctly determines a minimum cut even if the graph $G$ is disconnected.
Unfortunately as Theorem \ref{thm:low n-5} shows this naive strategy is almost optimal.

\begin{restatable}{theorem}{lownminus}\label{thm:low n-5}
There is a collection $\mathcal{G}_{(n)}$ of graphs such that any competitive algorithm solving $\mathsf{MinCut}[\mathcal{G}_{(n)}]$ requires at least $n-5$ bits of advice. 
\end{restatable}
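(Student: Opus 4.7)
The plan is a delay-commit construction: the first $n-5$ vertices arrive as an independent set identical across the whole family, so the algorithm's placement of them is a function of the advice string alone, and five ``structural'' vertices appended at the end close the graph to have a unique minimum cut that encodes the intended partition. For each $S \subseteq V' := \{v_1,\ldots,v_{n-5}\}$ with $S \ne V'$, I would define $G_S$ on $V' \cup \{a_1,a_2,a_3,b_1,b_2\}$ with no edges inside $V'$ and the edges $va_1, va_2$ for $v \in S$, $vb_1, vb_2$ for $v \in V' \setminus S$, the triangle $a_1 a_2 a_3$, the edge $b_1 b_2$, and the single bridging edge $a_3 b_1$. Let $\mathcal{G}_{(n)} = \{G_S : S \subsetneq V'\}$ with arrival order $v_1, \ldots, v_{n-5}, a_1, a_2, a_3, b_1, b_2$.

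First I would show $opt(G_S) = 1$ and that the unique minimum cut is $\bigl(S \cup \{a_1,a_2,a_3\},\, (V'\setminus S)\cup\{b_1,b_2\}\bigr)$. The key step is verifying that $a_3 b_1$ is the only bridge of $G_S$: every other edge sits on a triangle --- the $a$-triangle itself, the triangle $v a_1 a_2$ for each $v \in S$, the triangle $v b_1 b_2$ for each $v \in V' \setminus S$, and the same $b_1 b_2 v$ triangle covering the edge $b_1 b_2$ (for any $v \in V' \setminus S$, which is nonempty because $S \ne V'$). I would then establish a ``$+2$ per misplaced vertex'' penalty: any partition whose restriction to $V'$ differs from $S$ on exactly $k$ coordinates has cut size at least $1+2k$, because each misplaced $v \in V'$ contributes its two extra-incident edges to the cut; a short case analysis rules out any cheap partition that misplaces extras or tries to isolate a singleton out of the five-vertex gadget.

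For the counting step, take any competitive algorithm $\mathbf{A}^{adv}$ with advice length $b$ and additive constant $d$. After the $n-5$ vertices of $V'$ have been revealed, the induced subgraph is empty in every $G_S$, so the partition of $V'$ produced by the algorithm depends only on the advice string and takes at most $2^b$ distinct values. The $+2$ penalty together with $|E(X,Y)| \le 1+d$ forces the commitment's restriction to $V'$ to agree with $S$ on all but $\lfloor d/2\rfloor$ coordinates; the triangle inequality then implies distinct advice strings for any $S \ne S'$ with $|S \triangle S'| > d$. When $d=0$ this already gives $2^b \ge |\mathcal{G}_{(n)}| = 2^{n-5}-1$ and hence $b \ge n-5$; for general constant $d$ the same argument applied to a subfamily indexed by a binary code of length $n-5$ with minimum Hamming distance $>d$ (retaining $2^{n-5}/\poly(n)$ codewords by Gilbert--Varshamov) yields $b \ge n - 5 - O(\log n)$, matching the stated bound up to lower-order terms.

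The main obstacle is the uniqueness of the minimum cut. The five-vertex gadget must simultaneously make $a_3 b_1$ the sole bridge of $G_S$, force each $v \in V'$ to have degree exactly two to its ``correct'' side so each misplacement costs at least two, and preclude any small cut internal to $\{a_1, \ldots, b_2\}$. The triangle on $\{a_1, a_2, a_3\}$ together with the edge $b_1 b_2$ and the bridge $a_3 b_1$ is the lightest gadget I see that achieves all three properties; the degenerate subset $S = V'$ must be excluded because then $b_2$ becomes a pendant vertex and $\{b_2\}$ is a second minimum cut of size one, breaking the uniqueness that drives the counting argument.
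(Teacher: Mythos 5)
Your construction is essentially the paper's: there the terminal gadget is a path $x_1x_2x_3x_4$ revealed last, with each vertex of the initial independent set attached to either $\{x_1,x_2\}$ or $\{x_3,x_4\}$, so the minimum cut is the single middle edge and each misplaced vertex costs two extra cut edges --- exactly the role played by your triangle--bridge--edge gadget --- and the lower bound then follows by counting the $2^{n-5}$ pairwise distinct optimal partitions consistent with the identical prefix. The one substantive difference is that you track the additive constant $d$ explicitly via a minimum-distance code, which is more careful than the paper's ``must distinguish all optimal solutions'' step (the paper implicitly assumes exact optimality, i.e.\ $d=0$) but costs you an $O(\log n)$ loss for $d>0$; for $d=0$ both arguments give exactly $n-5$.
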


\begin{figure}[h]
	\includegraphics[width=6cm]{./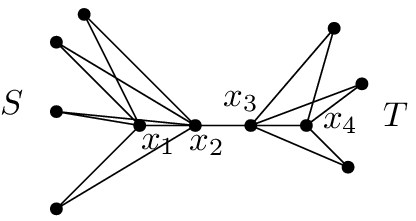}
	\centering
	\caption{The class $\mathcal{G}_{(n)}$ used in the proof of Theorem ~\ref{thm:low n-5} }
\label{fig:opt-adv}
\end{figure}

\begin{proof}
For every $n \ge 6$ we present a graph for which a competitive algorithm requires at least $n-5$ bits of advice.
Each graph $G$ in the collection has path a $P= (x_1,x_2,x_3,x_4)$ of length 4 (see Figure~\ref{fig:opt-adv}). 
Additionally, all other vertices of $G$ are divided into two parts $S$ and $T$.
Each vertex in $S$ is adjacent to both $x_1,x_2$ and each vertex in $T$ is adjacent to both $x_3,x_4$.
There are no other edges in $G$.
Suppose the adversary first reveals the vertices in $S \cup T$.
The induced subgraph  $G[S \cup T]$ forms an independent set.
Let $\Gamma_{n-4}$ be the set of potential graphs that remain after processing the set $\{v_1,v_2,\ldots,v_{n-4}\}$.
First we show $|\Gamma_{n-4}| = 2^{n-4}$.
This follows from the fact that the set $\{v_1,v_2,\ldots,v_{n-4}\}$ can be partitioned in $2^{n-4}$ different ways depending on which vertices (if any) are adjacent to $\{x_1, x_2\}$.
Since the labels $S$ and $T$ are interchangeable there are exactly $|\Gamma_{n-4}|/2 = 2^{n-5}$ pairwise distinct optimal solutions in $\Gamma_{n-4}$. 

An optimal algorithm, without advice, must be able distinguish between these pairwise distinct solutions before the path $P$ is revealed. By the standard information theoretic argument we see that  $\ge n-5$ advice bits are necessary to solve  $\mathsf{MinCut}[\mathcal{G}_{(n)}]$ optimally. 
\end{proof}

Next we ask : how much advice is necessary and sufficient to approximate the value of the mincut value. 
Theorem ~\ref{thm: classic} gives a $O(n/k)$-competitive algorithm even without advice whenever $k \ge 1$.
However, with only $O(\log n + \log \log n)$ bits of advice we can achieve a $\frac{\delta(G)}{k}$-competitive algorithm.
Here $\delta(G)$ is the minimum degree of $G$.
At the beginning we ask the oracle the position of a vertex with the minimum degree, which requires $O(\log n + \log \log n)$ bits.
The $\log \log n$ term correspond to the extra bits used to make the advice string self-delimiting.
The algorithm puts this vertex in one part and all other vertices into the other part resulting in a cut of size $\delta(G)$.
Unfortunately, if  $\delta(G) =  O(n)$ then it is no better than the algorithm without advice.
In the next theorem we show that this is essentially the best one can do.

\begin{restatable}{theorem}{advapprox}\label{thm: adv approx low}
Let $\mathbf A^{adv}$ be a $c$-competitive algorithm for $\mathsf{MinCut}[{\cal G}_{k}]$ where $1 \le k \le \lfloor\frac{n-4}{2}\rfloor$.
For every $\frac{k+1}{n} < \epsilon < \frac{1}{2}(1-\frac{1}{n})$, if
$\mathbf A^{adv}$ uses $b < n - 2\lceil \epsilon n \rceil-1 $ bits of advice 
then  $c \ge  \frac{\epsilon n-1}{k}$.
\end{restatable}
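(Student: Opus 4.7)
The plan is to generalize the hard family of Theorem~\ref{thm:low n-5} by replacing the length-four path with two $k$-edge-connected cliques joined by a $k$-edge matching, raising the optimal cut value from $2$ to $k$. Let $s = \lceil \epsilon n \rceil$ and $f = n - 2s$; the hypothesis $\epsilon > (k+1)/n$ gives $s \ge k+2$ and $\epsilon < \tfrac{1}{2}(1 - 1/n)$ gives $f \ge 1$. For each bit-string $a \in \{0,1\}^f$ I would construct $G_a$ on vertex set $H_A \sqcup H_B \sqcup U$ with $|H_A| = |H_B| = s$ and $U = \{u_1, \dots, u_f\}$ as follows: make $H_A$ and $H_B$ cliques, add a matching between $k$ designated vertices of each hub, and connect each $u_j$ to every vertex of $H_A$ if $a_j = 0$ and to every vertex of $H_B$ otherwise. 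A direct check using $s \ge k+2$ shows that $G_a \in \mathcal G_k$ with the canonical partition $(H_A \cup U_0, H_B \cup U_1)$ realizing the unique minimum cut of value $k$.

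Next, the adversary reveals $u_1, \dots, u_f$ first. They form an independent set in every $G_a$ with no visible incidences, so the algorithm's partition of $U$ is determined solely by the advice. Since swapping $a \mapsto \bar a$ yields the same labelled instance up to interchanging the two output sides, the number of truly distinct hard instances is $2^{f-1}$. Hence, if $b < f - 1 = n - 2\lceil \epsilon n \rceil - 1$, then $2^b < 2^{f-1}$ and by pigeonhole two classes represented by $a, a'$ with $a' \notin \{a, \bar a\}$ share the same advice string.

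To finish, let $a^*$ encode the algorithm's forced partition of $U$. Once $H_A \cup H_B$ arrives the algorithm can do no better than to orient each hub on the side that minimizes its $U$-contribution, which yields a cut on $G_a$ of value at least $k$ plus $s$ times the minimum (over the two hub orientations) of the number of misplaced $u_j$'s; this minimum is $0$ exactly when $a^* \in \{a, \bar a\}$. Since $\{a, \bar a\} \ne \{a', \bar{a'}\}$, at least one of the two shared-advice classes lies strictly outside $\{a^*, \bar{a^*}\}$, and on the corresponding instance the algorithm cuts at least $k + s \ge k + \epsilon n$ edges against an optimum of $k$. Plugging into $\mathbf A^{adv}(G,\pi) \le c \cdot opt(G) + d$ gives $c \ge (k + \epsilon n - d)/k \ge (\epsilon n - 1)/k$, with the additive constant $d$ absorbed either directly (when $d \le k+1$) or by taking $n$ sufficiently large.

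The main obstacle is the structural verification that $opt(G_a) = k$ with a unique optimal partition (up to side-swap); this is exactly where the condition $s \ge k+2$ is needed, since any cut internal to a hub has size $\ge s - 1 \ge k + 1 > k$ and any cut isolating a single free vertex has size $s > k$. A secondary accounting point is to confirm that committing a $u_j$ to the wrong side cannot be rescued by any later placement of $H_A$ or $H_B$: the $s$ edges from $u_j$ to its attached hub will cross the cut regardless of how the hubs are subsequently oriented, which is what makes the single-misplacement argument above yield a cost of exactly $s$.
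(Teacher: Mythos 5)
Your proposal is correct and follows essentially the same route as the paper: two size-$\lceil\epsilon n\rceil$ cliques joined by $k$ edges, a set of $n-2\lceil\epsilon n\rceil$ independent ``free'' vertices each attached to all of one clique, revealed first so that the advice alone determines their placement, followed by the same $2^{n-2\lceil\epsilon n\rceil-1}$ count of distinct optima and a pigeonhole argument. Your write-up is in fact somewhat more careful than the paper's on the uniqueness of the optimal cut, the side-swap symmetry, and the absorption of the additive constant $d$, but these are refinements of the identical construction rather than a different argument.
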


\begin{proof}
We will show that there is an infinite family $\mathcal{G}_{(n)}$ of graphs for which the theorem holds.
Consider a graph $G \in \mathcal{G}_{(n)}$ as shown in Figure~\ref{fig:opt-adv-app}. 
The induced subgraphs $G[A]$ and $G[B]$ are both cliques of size $p > k+1$.
We connect $A$ and $B$ via the sets $A'$ and $B'$.
Since the minimum cut is $k$ we ensure $E(A',B') = k$.
The induced subgraphs $G[C]$ and $G[D]$ are both independent sets and $|C| + |D| \ge 2$.
Each vertex in $C$ (resp. $D$) is adjacent to all vertices in $A$ (resp. $B$).
The adversary sends the vertices in the set $C \cup D$ before sending any of the vertices in $A \cup B$.
Let $\Gamma_{C \cup D}$ be the set of potential graphs 
after $G[C \cup D]$ has been revealed.
Depending on how the vertices in $C \cup D$ are connected to $A \cup B$ there are $2^{|C| + |D|-1} = 2^{n-2p-1}$ pairwise different optimal solutions with a minimum cut of $k$ corresponding to the set $\Gamma_{C \cup D}$.
This is essentially the same argument we used when proving Theorem ~\ref{thm:low n-5}.
With $b$ bits of advice there are only $2^b$ possible advice strings.
Hence there exists some advice string $\phi$ which is read by $\mathbf A^{adv}$ for at least $2^{n-2p-1} / 2^b$ inputs having pairwise different optimal solutions. 
Let this set be $\mathcal S$.
If $|\mathcal S| > 1$ then the adversary can fool $\mathbf A^{adv}$ by choosing an input from $\cal S$ that results in a non-optimal solution when used with $\phi$.
Suppose after reading $\phi$, $\mathbf A^{adv}$ chooses a partition of $C \cup D$ according to a solution $(X',Y')$ (aka a partition of G) in $\mathcal{S}$.
Then adversary sends the rest of $G$ (aka the vertices in $A \cup B$) according to some other partition $(X'',Y'') \in {\mathcal S}$.
Since $\mathbf A^{adv}$ has no means of distinguishing these to case based on the advice string $\phi$ it will fail to optimally partition $G$.
It is easy to see that  for any non-optimal partition $(X',Y') \ne (X,Y)$ of $G$ we have $\lambda(X',Y') \ge p-1 = \frac{p-1}{k}opt(G)$.
Thus we must have $|\mathcal{S}| \le 1$, which implies $n-2p-1-b \le 0$.
Taking $p = \lceil \epsilon n \rceil$ we see $b \ge n - 2\lceil \epsilon n \rceil-1 $ if $\mathbf A^{adv}$ to be less than $c$-competitive. 
\end{proof}


Theorems \ref{thm:up n-1}-\ref{thm: adv approx low} together shows a limitation of the advice  model. Unlike  $\mathsf{AOC}$-complete problems the advice complexity for \mc has a  sharp phase transition. Either we have sufficient amount of advice to produce an optimal solution or a sub-linear competitive ratio cannot be guaranteed. 


\subsection{Semi-adversarial Models: Random Vertex Order}
In the previous section we showed that there is a $O(n/k)$-competitive algorithm when both the input graph and the order of arrival is determined by an adversary.
This upper bound also holds when the order of arrival is determined by a random permutation.
Unfortunately, it turn's out this is the best we can do without any restriction on the input graph.
We show this next. We complement this lower bound result with an $O(1)$ upper bound for sparse connected graphs.

\begin{restatable}{theorem}{randorderlow}\label{thm: rorder}
For any deterministic algorithm $\mathbf A$ for   $\mathsf{MinCut}[{\cal G}_{k}]$  under the random-vertex order model there exists a class of graphs $\mathcal{G}_{(n)}$ for infinitely many values of $n$ for which,  $$\mathbb{E}[\mathbf{A}(G)] \ge \frac{n}{64k} \cdot opt(G).$$ 
 Here the expectation is taken over the random order.
\end{restatable}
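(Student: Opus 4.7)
The plan is to exhibit, for each deterministic algorithm $\mathbf A$, a graph family $\mathcal G_{(n)}$ on which $\mathbf A$'s random-order expected cost is at least $\frac{n}{64k}\cdot opt(G)$. My proposed construction is a \emph{hub-and-leaf} graph: for each sufficiently large $n$, let $L$ and $R$ be vertex-disjoint cliques of size $k+1$ joined by a perfect matching of $k$ edges, and let $S,T$ be independent sets of size $q=(n-2k-2)/2$ each, with $S$ fully bipartite to $L$ and $T$ fully bipartite to $R$. A direct check shows that every cut has size $\ge k$ and the canonical partition $(L\cup S, R\cup T)$ achieves cut of size $k$, so $opt(G)=k$ and $G\in\mathcal G_k$.

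The first step would be to prove that under a uniform random $\pi$, with probability at least $1/2$ the first $m:=\lceil n/(8k)\rceil$ arrivals lie entirely in $S\cup T$. Because $|S\cup T|=2q$ and $|L\cup R|=2k+2$, the hypergeometric probability is at least $\prod_{i=0}^{m-1}\bigl(1-\tfrac{2k+2}{n-i}\bigr)\ge 1-\tfrac{m(2k+2)}{n}\ge 1/2$ for $n$ large. Conditioned on this event, the first $m$ vertices appear to $\mathbf A$ as isolated vertices, so (since $\mathbf A$ is deterministic) the set $I\subseteq[m]$ of positions placed into $X$ is fixed independently of the hidden identities.

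Next I would argue that the identities of these $m$ leaves, conditional on all lying in $S\cup T$, are close to i.i.d.\ $\mathrm{Bernoulli}(1/2)$ over $\{S,T\}$ (exact hypergeometric, essentially Binomial since $m\ll q$). Define the quantity $W:=|\text{first }m\cap S\cap Y|+|\text{first }m\cap T\cap X|$; one checks that $W=\sum_{i\in I}(1-\xi_i)+\sum_{i\notin I}\xi_i$ with $\xi_i:=\mathbf{1}[\sigma(i)\in S]$, so $W$ is approximately $\mathrm{Bin}(m,1/2)$. Hence $\mathbb E[\min(W,m-W)]=m/2-\mathbb E[|W-m/2|]\ge m/2-\sqrt{m/(2\pi)}\ge m/4$ for $m$ large. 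This quantity measures the mistakes against whichever of the two canonical sidings $\mathbf A$ eventually commits to.

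The final step relates mistakes to cut cost. If the final partition places $L$ and $R$ on opposite sides, every misplaced leaf from the early $m$ vertices contributes $k+1$ cut edges (all its edges into its hub clique), giving cost at least $(k+1)\cdot\min(W,m-W)$. Combined with the conditioning probability this yields expected cost at least $\tfrac12\cdot(k+1)\cdot\tfrac{m}{4}\ge\tfrac{(k+1)n}{64k}\ge\tfrac{n}{64}$, as required.

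The main obstacle is the case where $\mathbf A$ ends with a ``one-sided'' partition, for example $|Y|=1$, which produces a cut of only $\deg(v)$ for the isolated vertex and could sidestep the mistake analysis. My plan to address this is a case split: either $\mathbf A$'s final partition has both $L$ and $R$ entirely on one side, in which case $\mathrm{cut}=(k+1)|Y|$ and a separate argument (using that when $|Y|$ is small the algorithm must have committed one specific early vertex to $Y$, whose expected degree over random $\pi$ is still large enough) yields the bound; or $\mathbf A$ places $L,R$ on opposite sides and the leaf-mistake argument above applies. Making both branches of the case split simultaneously give $\Omega(n)$ is the most delicate part, and may require modifying the construction slightly (for instance, by inflating each leaf into a small clique so that every vertex has degree $\Omega(n/k)$) to force all isolate-type strategies to pay $\Omega(n/k)$ as well.
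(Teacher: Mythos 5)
Your construction does not support the claimed bound, and the gap is exactly the one you flag at the end but do not resolve. In your hub-and-leaf graph every vertex of $S\cup T$ has degree $k+1$, so the graph has only $O(nk)$ edges and minimum degree $k+1=o(n)$ whenever $k=o(n)$. Consequently the trivial deterministic strategy ``place $v_1$ (or $v_2$) alone in $Y$ and everything else in $X$'' — which is precisely the paper's Algorithm~1 for sparse graphs — achieves expected cut equal to the expected degree of a uniformly random vertex, namely $2|E|/n=O(k)=O(opt(G))$. This is $O(1)$-competitive in expectation on your family and refutes the claimed lower bound of $\tfrac{n}{64k}\cdot opt(G)=n/64$. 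Any valid construction must have minimum degree $\Omega(n)$, since isolating a single vertex is always available to the algorithm; your proposed patch of inflating each leaf into a clique of size $n/k$ only raises the minimum degree to $\Omega(n/k)$, which still caps the isolating strategy's cost at $O(n/k)$ rather than forcing $\Omega(n)$. The earlier steps (independent-set prefix, fixed decision set $I$, binomial anti-concentration for $\min(W,m-W)$) are fine in themselves, but they also quietly assume $m=\lceil n/(8k)\rceil$ is large, which fails when $k=\Theta(n)$.

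The paper's proof avoids all of this by making the two hubs linear-sized: cliques $A,B$ of size $(1/2-\epsilon)n$ joined by $k$ edges, with independent sets $C,D$ of size $\epsilon n$ attached completely to $A$ and $B$ respectively. Then every vertex has degree $\Omega(n)$ and a \emph{single} misplacement already costs $\Omega(n)$, so one only needs to analyze the first two arrivals: with probability $2\epsilon^2$ they land in $C$ and $D$ (and look identical to two vertices on the same side), and with probability roughly $(1/2-\epsilon)^2$ two adjacent arrivals land in $A$ and $B$; whichever irrevocable choice the algorithm makes for $v_2$, one of these constant-probability events forces a cut of size at least $(1/2-\epsilon)n$ even if the algorithm is allowed to finish offline optimally. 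Optimizing $\epsilon=1/4$ gives $n/64$. If you want to salvage your write-up, replace your size-$(k+1)$ cliques with linear-sized ones; your multi-mistake accounting then becomes unnecessary.
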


\begin{figure}[h]
	\includegraphics[width=9cm]{./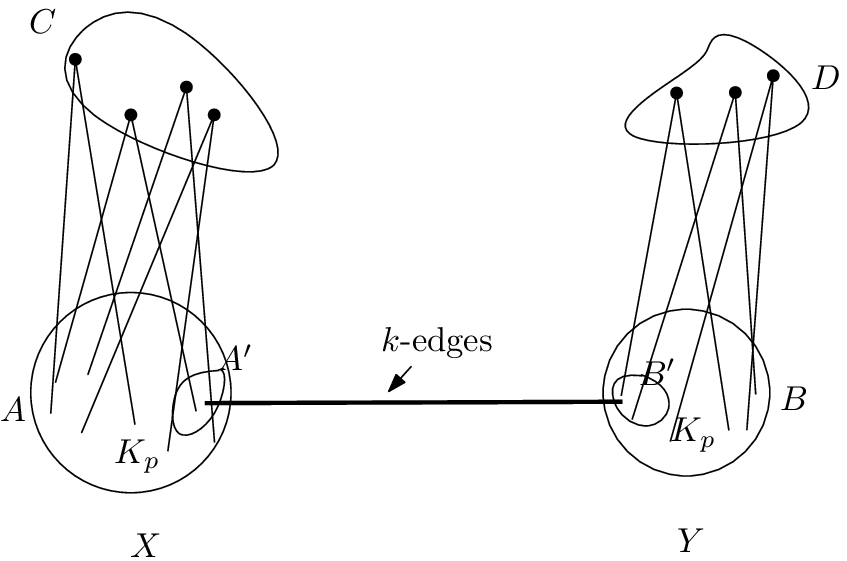}
	\centering
	\caption{A graph $G \in \mathcal{G}_{(n)}$ used in the proof of Theorem ~\ref{thm: adv approx low}}
\label{fig:opt-adv-app}
\end{figure}

\begin{proof}
We use the  class of graphs $\mathcal{G}_{(n)}$ from Theorem ~\ref{thm: adv approx low} (Figure~\ref{fig:opt-adv-app}).
Here we take $|C|=|D|=\epsilon n$ and $|A|=|B| = (1/2-\epsilon)n$.
We note that $\lambda(G) = k$, same as before.
An optimal offline algorithm returns this value.
Consider any online algorithm $\mathbf{A}$. 
Without loss of generality we may assume $v_1$ is assigned to the partition $X$.
Let $V_i = \{v_1,\ldots,v_i\}$ be the set of vertices to arrive so far.
Let $\mathbb{E}[\mathbf{A}(G, V_i)]$ be the
expected value of the mincut computed by the online algorithm after processing the vertices $v_1$ through $v_i$.
Let $\mathbf{A}_i$ be the following algorithm which has two phases: online and offline.
In the online phase it processes the first $i$ vertices same as $\mathbf{A}$ creating a partial solution.
Then it is allowed to read the rest of the input just like an offline algorithm.
This is the offline phase.
It  outputs a final partition that minimizes the cut value while respecting the decisions made during its online phase.
Let $\mathbb{E}[\mathbf{A}_i(G)]$ be the expected value of the minimum cut computed by $\mathbf{A}_i$.
It is clear that $\mathbb{E}[\mathbf{A}(G)]=\mathbb{E}[\mathbf{A}(G, V_n)]= \mathbb{E}[\mathbf{A}_n(G)] $.
Further, the function $\mathbb{E}[\mathbf{A}_i(G)]$ is monotonically increasing in $i$.
Hence we have, $$\mathbb{E}[\mathbf{A}(G, V_n)] \ge \mathbb{E}[\mathbf{A}_2(G)]$$ 
We give a lower bound for $\mathbb{E}[\mathbf{A}_2(G)]$ as claimed in the theorem.
Let $\lambda(G,X_i,Y_i)$ be the minimum cut achievable after assigning the first $i$ vertices by $\mathbb{A}$, where $(X_i, Y_i)$ is the resulting partition.
There are two cases as follows.

\paragraph{Case 1:}
 [$v_1$ and $v_2$ are not adjacent]. $\mathbf{A}_2$ either puts (1) both of them in $X$ or (2) puts $v_2$ in $Y$. Suppose $\mathbf{A}_2$ chooses (1). Then,
    \begin{align}\label{eq: exp1}
    \mathbb{E}&[\mathbf{A}_2(G) |\ v_1,v_2\ \mbox{are not adjacent}]  =
    \mathbb{P}[v_1,v_2 \in C\ \mbox{or}\  v_1,v_2 \in D]\cdot k\nonumber \\& 
    + 2\mathbb{P}[v_1 \in A\ \mbox{and}\  v_2 \in B]\cdot \alpha_1 \nonumber\\& 
    + 2\mathbb{P}[v_1 \in C\ \mbox{and}\  v_2 \in D]\cdot \alpha_2 \nonumber \\
    & \ge 2\mathbb{P}[v_1 \in C\ \mbox{and}\  v_2 \in D]\cdot \alpha_2
    \end{align}
Here $\alpha_2$ is a lower bound on the minimum cut found by  $\mathbf{A_2}$ when $v_1$ and $v_2$ are in different stable sets $C$ and $D$. Similarly we define $\alpha_1$ (which is ignored).
Clearly $\alpha_2 \ge |A| = (1/2-\epsilon)n$.
Since $v_1, v_2$ are picked from a random order,

\begin{align}
  &  \mathbb{P}[v_1 \in C\ \mbox{and}\  v_2 \in D] = \epsilon^2
\end{align}

\noindent From Equation \ref{eq: exp1} we get:

\begin{align}
    \mathbb{E}&[\mathbf{A}_2(G) |\ v_1,v_2\ \mbox{are not adjacent}] \ge  2\epsilon^2(1/2-\epsilon)n 
\end{align}

\noindent Now suppose $\mathbf{A}_2$ puts $v_2$ in $Y$.
A similar argument to the one above can be made to show that,
\begin{align}
    \mathbb{E}&[\mathbf{A}_2(G) |\ v_1,v_2\ \mbox{are not adjacent}] \ge  2 \epsilon^2(1/2-\epsilon)n
\end{align}

\paragraph{Case 2:}[$v_1$ and $v_2$ are adjacent.] Again we have two possibilities. (1) $\mathbf{A}_2$ puts $v_2$ in $X$ and (2) $\mathbf{A}_2$ puts $v_2$ in $Y$.
For the first case we have,

   \begin{align}\label{eq: exp}
    \mathbb{E}&[\mathbf{A}_2(G) |\ v_1,v_2\ \mbox{are  adjacent}]  =
    \mathbb{P}[v_1,v_2 \in A\ \mbox{or}\  v_1,v_2 \in B]\cdot k\nonumber \\& 
    + 2\mathbb{P}[v_1 \in A\ \mbox{and}\  v_2 \in B]\cdot (1/2-\epsilon)n \\&
    \ge (1/2-\epsilon)^3n
    \end{align}

\noindent In a similar manner we find that if $\mathbf{A}_2$ puts $v_2$ in $Y$ then,

   \begin{align}\label{eq: exp2}
    \mathbb{E}&[\mathbf{A}_2(G) |\ v_1,v_2\ \mbox{are  adjacent}]  \ge \epsilon^2(1/2-\epsilon)n
    \end{align}
In all of the of the above cases  regardless of what $\mathbf{A}_2$ chooses do with $v_2$ we have ,
$$\mathbb{E}[\mathbf{A}_2(G)] \ge \min (\epsilon^2(1/2-\epsilon)n,(1/2-\epsilon)^3n)$$
The right hand side of the above expression is maximized when $\epsilon = 1/4$ and we get $\mathbb{E}[\mathbf{A}_2(G)] \ge n/64$.
\end{proof}

\subsection{Semi-adversarial Models: Specific Graph Classes}
\subsubsection{Sparse Connected Graphs}

In this section we present a result on sparse connected graphs. Sparseness here is defined to mean that the graph has linear number of edges.
Many important families of graphs falls in this category such as planer graphs, degree bounded expanders etc.

\begin{algorithm}[h!]
\caption{ An algorithm for sparse graphs}
\begin{algorithmic}[1]
\STATE {\bf Input:} A sparse connected graph $G$ with an vertex arrival order chosen uniformly at random.
\STATE {\bf Output:} A cut of $G$.
\STATE Initialize: $X \leftarrow \emptyset$ , $Y \leftarrow \emptyset$ and $i \leftarrow 1$.
\WHILE{$i \le n$}
\IF{$i == 1$}
\STATE $Y \leftarrow \{v_i\}$
\ELSE 
\STATE $X \leftarrow X \cup \{v_i\}$
\ENDIF 
\STATE $i \leftarrow i + 1$
\ENDWHILE
\end{algorithmic}
\label{alg: sparse}
\end{algorithm}

\begin{restatable}{theorem}{sparsrand}\label{thm: sparserand}
In the random order model there is $O(1)$-competitive algorithm in expectation for sparse connected graphs with $O(n)$ edges.
\end{restatable}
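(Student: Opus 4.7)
The plan is to observe that Algorithm~\ref{alg: sparse} produces a cut whose size equals the degree of the first-arriving vertex, and then to use the random-order assumption to compute the expectation via a handshake-style averaging.

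First I would verify the output of the algorithm: the partition it returns is $(X,Y)=(V\setminus\{v_1\},\{v_1\})$, so the resulting cut is $E(V\setminus\{v_1\},\{v_1\})$, whose size is exactly $\deg_G(v_1)$. This is a valid cut because $G$ is connected with at least two vertices, so $\deg_G(v_1)\ge 1$.

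Next I would take expectation over the uniformly random permutation $\pi$. Since $v_1$ is a uniformly random vertex of $G$,
\begin{align*}
\mathbb{E}[\mathbf{A}(G,\pi)] \;=\; \mathbb{E}[\deg_G(v_1)] \;=\; \frac{1}{n}\sum_{v\in V(G)} \deg_G(v) \;=\; \frac{2m}{n}.
\end{align*}
The sparseness hypothesis $m=O(n)$ then gives $\mathbb{E}[\mathbf{A}(G,\pi)]\le \beta$ for some absolute constant $\beta$ depending only on the sparseness parameter. On the other hand, because $G$ is connected with $|V(G)|\ge 2$, at least one edge crosses every cut, so $\mathrm{opt}(G)\ge 1$. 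Combining these bounds yields
\begin{align*}
\mathbb{E}[\mathbf{A}(G,\pi)] \;\le\; \beta \;\le\; \beta\cdot \mathrm{opt}(G),
\end{align*}
which is the $O(1)$-competitive guarantee claimed.

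There is essentially no obstacle here, since the analysis reduces to the elementary identity $\sum_v \deg(v) = 2m$. The only subtlety worth remarking on is that the competitive constant inherits the hidden constant from the $O(n)$ edge bound, and that connectedness is what prevents the ratio from degenerating (on a disconnected graph the offline minimum could be $0$, and the statement would fail). Sparseness is used only through the linear upper bound on $m$, so the same argument applies verbatim to any graph class where $m/n$ is bounded by a constant, e.g.\ planar graphs or graphs of bounded degeneracy.
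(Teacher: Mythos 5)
Your proposal is correct and follows essentially the same route as the paper: both identify the output cut as the edge set incident to the uniformly random first vertex, bound its expected size by $\tfrac{1}{n}\sum_v \deg_G(v) = \tfrac{2m}{n} = O(1)$ via the handshake lemma, and use connectivity to get $\mathrm{opt}(G)\ge 1$. Your version is marginally cleaner in averaging degrees directly rather than grouping by degree value, but the argument is the same.
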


\begin{proof}
We show that Algorithm ~\ref{alg: sparse} is $O(1)$-competitive.
Suppose the graph $G$ has $O(n)$ edges and is connected.
Algorithm ~\ref{alg: sparse} essentially  puts a random vertex in $Y$ and rest in $X$.
Since $G$ is connected $\lambda(G) \ge 1$.
Let $v^*$ be the vertex chosen to be in $Y$ and $\mathbb{E}[d(v^*)]$ be its expected degree.
Let $d_1 > \ldots > d_n$ be the degree sequence of $G$.
The number of vertices of degree $d_i$ is at most $n_i$.
From the first theorem of graph theory we have $\sum_i n_id_i = 2|E| = O(n)$.

\noindent Now,
$$\mathbb{E}[d(v^*)] = \sum_i \mathbb{P}[d(v^*) = d_i]d_i = \frac{1}{n}\sum_in_id_i = \frac{1}{n}O(n) = O(1)$$

\noindent Hence the competitive ratio is bounded.

\end{proof}
\begin{corollary}
For a class of connected graphs with $m$-edges there is an $O(\frac{m}{n})$-competitive algorithm.
\end{corollary}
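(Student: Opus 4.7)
The plan is to observe that the corollary is essentially a parametrized restatement of the proof of Theorem~\ref{thm: sparserand}: the quantity that was absorbed into the $O(1)$ bound there is precisely $2m/n$, so replacing the ``sparse'' assumption with the general edge count $m$ only changes the final arithmetic, not the algorithm or the structure of the analysis. I would therefore reuse Algorithm~\ref{alg: sparse} verbatim and re-run the expectation computation keeping $m$ as a free parameter.

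Concretely, I would first note that since $G$ is connected, $\lambda(G)\ge 1$, so it suffices to upper bound $\mathbb{E}[\mathbf{A}(G)]$ by $O(m/n)$. The algorithm isolates the first arriving vertex $v^*$ into $Y$ and places every other vertex in $X$, so the cut it outputs is exactly $d(v^*)$. Because the arrival order is uniformly random, $v^*$ is a uniformly random vertex of $G$, and thus
\begin{equation*}
\mathbb{E}[\mathbf{A}(G)] \;=\; \mathbb{E}[d(v^*)] \;=\; \frac{1}{n}\sum_{v\in V} d(v) \;=\; \frac{2m}{n},
\end{equation*}
by the handshake lemma. Combining with $opt(G)\ge 1$ yields the claimed $O(m/n)$ competitive ratio in expectation.

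There is no real obstacle here; the only point worth mentioning is the role of connectivity, which is what guarantees $opt(G)\ge 1$ and hence makes the ratio meaningful. If one wanted to strengthen the bound one could appeal to $opt(G)\ge \delta(G)$ and write the ratio as $O(m/(n\,\delta(G)))$, but for the stated corollary the one-line computation above is sufficient.
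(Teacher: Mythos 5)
Your proposal is correct and matches the paper's approach: the paper proves the corollary by noting it follows immediately from Theorem~\ref{thm: sparserand}, whose proof is exactly your computation $\mathbb{E}[d(v^*)]=\frac{1}{n}\sum_v d(v)=\frac{2m}{n}$ combined with $opt(G)\ge 1$ for connected $G$. You have simply made explicit the dependence on $m$ that the theorem's proof absorbs into $O(1)$ under the sparsity assumption.
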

\begin{proof}
Immediately follows from theorem \ref{thm: sparserand}. 
\end{proof}



\subsubsection{Dense Random Graphs ($p = \omega(\log n/n)$)}
In this model the input graph itself is random. We denote by $G_{n,p}$ as the random graph generated according to the Erdos-Renyi model \cite{bollobas1985random}. That is, each edge of $G_{n,p}$ is present with probability $p$ which is independent of other edges.
We take $q =  1-p$.
It is well known that for such graphs the expected minimum degree $\mathbb{E}[\delta(G_{n,p})]$ and the expected size of the minimum cut $\mathbb{E}[\lambda(G_{n,p})]$ are closely related. Specifically, in \cite{bollobas1985random} it was shown that:
\begin{align}
    \mathbb{P}[\lambda(G_{n,p})=\delta(G_{n,p})] \to 1\ \mbox{as $n \to \infty$} 
\end{align}
for almost every $G_{n,p}$ with high probability\footnote{The probability tends to 1 as $n \to \infty$ } (w.h.p.). Hence for a random graph, approximating the minimum cut is equivalent to approximating the minimum degree  w.h.p.
In the case of dense graphs we assume $p$ is sufficiently large so the $G_{n,p}$ is connected w.h.p.
It is well known that taking $p = \omega(\log n / n)$  suffice for this purpose \cite{bollobas2001random}.

We discussed the vertex random order model in the previous section.
In that model, the above problem seems similar to the well known \emph{Secretary problem}.
However there are some important differences.
Recall that in the classical Secretary problem\footnote{Here we use the adjective classical to differentiate it from several of its variants which were developed subsequently.} there is a set $S$ of $n$ secretaries. 
Secretaries are ranked according to some total order.
In the online problem, a random order is selected and secretaries arrive one at a time for their interview.
The algorithm must decide whether to hire the secretary or reject them.
Both of these are irrevocable decisions.
The problem is to come up with a strategy that maximizes the probability of hiring the best secretary.
The well known optimal solution is to reject the first $\lfloor n/e\rfloor$ secretaries (here $e$ is the base of the natural logarithm) and accept among the subsequent candidates the first secretary whose rank is better than the secretaries interviewed so far.
It was shown in \cite{chow1964optimal} that it is possible to obtain a rank in expectation which is about $3.87$ times that of the optimal rank 1.
Thus under the random order model the classical secretary problem has a constant competitive ratio.

\subsubsection{Approximating  $\delta(G_{n,p})$ }
First we discuss  why  the minimum degree estimation using  the secretary selection strategy  fails for a $G$ which is adversarially generated.
\ifx false
\begin{align}
    \mathbb{E}[\mathbb{A}(\delta(G_{n,p}))]\le \mathbb{E}[\mathbb{A}(\delta(G))]
\end{align}

where the expectation on the left hand side is taken over the graph random variable $G_{n,p}$ and the right hand side expectation is taken over the random ordering of the vertices.
\fi
The upper bound of the rank in \cite{chow1964optimal} is not enough in this case.
Let $d_1 \ge d_2 \ge \ldots \ge d_n$ be the degree sequence of $G$.
It is possible to have a degree sequence of the following from:
$$d_1 = n-1=d_2=\ldots=d_{\delta(G)}, d_{\delta(G)+1}=n-2=\ldots=d_{n-1}\ \mbox{and}\  d_{n}=\delta(G).$$
Let the expected rank of the selected vertex using the above secretary selection strategy  be $m^*$.
If $G$ is adversarially generated, then $d_{m^*} \ge cn$  for some constant $c > 0$.
However, in our case we are dealing with random graphs for which we can avoid this situation.

\ifx false
\begin{theorem}{(Theorem 3.16 in \cite{bollobas2001random})}
Suppose $m(n) \to \infty $ when $n \to \infty$. Then for almost every random graph $G_{n,p}$ the following holds:
\begin{align}\label{eq: deg-diff}
    d_i - d_{i+1} \le \frac{\omega(1)}{(m(n))^2}\left(\frac{pqn}{\log n}\right)^{1/2} \ \mbox{and}\ i \le m
\end{align}
where $\omega(1)$ is a function on $n$ that grows arbitrarily slowly and $m/n < 1$.

\end{theorem}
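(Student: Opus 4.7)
The plan is to establish tight concentration of the order statistics of the degree sequence in $G_{n,p}$. For each vertex $v$, the degree $d(v)$ is distributed as $\mathrm{Bin}(n-1,p)$, with mean $\mu=(n-1)p$ and standard deviation $\sigma=\sqrt{(n-1)pq}$, so a Chernoff bound already gives that with high probability every degree lies in a window of radius $O(\sigma\sqrt{\log n})$ about $\mu$. The theorem asserts much more: inside this window the degrees are so bunched that consecutive top order statistics differ only slightly. The natural quantity to control is, for each short interval $I$, the number $N(I)$ of vertices whose degree falls in $I$.

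First I would establish a local density estimate. By the local de Moivre--Laplace theorem, $\mathbb{P}[d(v)=t]=\Theta(1/\sigma)$ uniformly for $t$ in the bulk, so for any interval $I$ of width $w$ one has $\mathbb{E}[N(I)]=\Theta(nw/\sigma)$. To upgrade this to a high-probability statement I would compute the second moment of $N(I)$: the degrees $d(u)$ and $d(v)$ are not independent, but their covariance is controlled because conditioning on the single shared edge $uv$ shifts each marginal by at most one. Either Chebyshev's inequality applied to this variance bound, or an edge-exposure Doob martingale combined with Azuma--Hoeffding, then yields sharp concentration of $N(I)$ around its mean whenever $\mathbb{E}[N(I)]$ is not too small.

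Next I would translate these estimates into gap bounds. A gap $d_i-d_{i+1}>g$ means the interval $(d_{i+1},d_i)$ of width $g$ contains no degrees, whereas the concentration result predicts $\Theta(ng/\sigma)$ vertices in any interval of width $g$ near the $i$-th quantile. For $i\le m(n)$, the $i$-th order statistic sits at a location where the Gaussian tail density is of order $\sqrt{\log(n/m)}\cdot m/(n\sigma)$, so emptying such an interval forces $g\lesssim \sigma/m^{2}$ up to logarithmic factors. Plugging in $\sigma=\Theta(\sqrt{pqn})$ and absorbing the $1/\sqrt{\log n}$ coming from the Gaussian tail yields the claimed bound $\omega(1)\cdot m(n)^{-2}\sqrt{pqn/\log n}$.

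The main obstacle will be a clean, uniform treatment across the full range $i\le m(n)$: the local-limit approximation is sharp in the bulk but weakens in the upper tail, where one must instead invoke a moderate-deviation (tilted Chernoff) estimate to keep the local density bound $\Theta(1/\sigma)$ honest. I would handle this by dyadically partitioning $[1,m(n)]$ and proving a separate variance estimate on each dyadic block, then taking a union bound over the $O(\log m(n))$ blocks; the resulting logarithmic overhead is harmless because it is absorbed into the slowly growing $\omega(1)$ factor permitted by the statement.
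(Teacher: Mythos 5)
First, a point of comparison: the paper does not prove this statement at all --- it is quoted as Theorem 3.16 of Bollob\'as's \emph{Random Graphs}, and in the source it sits inside a block disabled by an \texttt{ifx false} guard, so there is no internal proof to measure your attempt against. More seriously, the statement as transcribed appears to have the inequality reversed: Bollob\'as's actual theorem is a \emph{lower} bound, $d_i-d_{i+1}\ge \epsilon_n m^{-2}(pqn/\log n)^{1/2}$ for $i\le m$ with $\epsilon_n\to 0$ slowly, used to show that the top $m$ degrees are distinct and well separated. The upper bound you are asked to prove is false: already $d_1-d_2$ is typically of order $(pqn/\log n)^{1/2}$ (the top spacing of $n$ nearly independent $\mathrm{Bin}(n-1,p)$ degrees), which exceeds $\omega(1)\,m^{-2}(pqn/\log n)^{1/2}$ by a factor $m^2/\omega(1)\to\infty$. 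Alternatively, $d_1-d_{m+1}\sim \sigma\bigl(\sqrt{2\log n}-\sqrt{2\log(n/m)}\bigr)=\Theta\bigl(\sigma\log m/\sqrt{\log n}\bigr)$, so by pigeonhole some gap with $i\le m$ is at least $\Omega\bigl(m^{-1}\log m\,(pqn/\log n)^{1/2}\bigr)$, again incompatible with the claimed bound for any slowly growing $\omega(1)$.

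Within your own argument the gap is localized in the sentence ``emptying such an interval forces $g\lesssim \sigma/m^{2}$.'' Using the tail density you yourself compute, an interval $I$ of width $g$ at the location of the $m$-th order statistic has $\mathbb{E}[N(I)]=\Theta\bigl(gm\sqrt{\log(n/m)}/\sigma\bigr)$, so forcing $N(I)\ge 1$ with high probability only yields $g=O\bigl(\sigma\,\omega(1)/(m\sqrt{\log n})\bigr)$ --- one power of $m$ short of the claim --- and no refinement of the first/second-moment interval count can recover the missing factor, since a uniform $m^{-2}$ upper bound on all $m$ gaps would contradict the telescoping identity above. The $m^{-2}$ in the genuine theorem arises from a birthday-paradox union bound over the $\binom{m}{2}$ pairs of top degrees (the probability that some pair lies within $g$ of one another is roughly $m^{2}g$ divided by the range they occupy), which naturally produces a \emph{lower} bound on the minimum gap, not an upper bound on the maximum one. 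Your toolkit --- local limit estimates for $\mathrm{Bin}(n-1,p)$, covariance control via the single shared edge, and concentration of level-set counts --- is exactly the right machinery for this family of results, but it should be aimed at the lower-bound statement; as written, the target is unprovable and the key step is a non sequitur.
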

\fi
Our algorithm is simple, in fact trivial.
We assume no knowledge of the size of the input graph. 
This makes sense when proving an upper bound as is the case here.
Since, we must create a valid partition we simply put the first vertex to arrive in the part $X$ and rest to part $Y$.
In this case the expected  value of the cut will be the same as the expected degree of the first vertex which is  $np$.
It is easy to see that this is the best one can do.

\begin{theorem}
The above strategy approximates the min-cut within a constant factor w.h.p. whenever $p = \omega(\log n / n)$.
\end{theorem}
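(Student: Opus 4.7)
The plan is to show that both the algorithm's output and the minimum cut of $G_{n,p}$ concentrate tightly around the expected degree $\mu = (n-1)p$, so that their ratio is $1+o(1)$ and hence bounded by a constant with high probability.

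First, I would invoke the two facts already recorded in the excerpt: when $p = \omega(\log n/n)$ the random graph $G_{n,p}$ is connected w.h.p., and moreover $\lambda(G_{n,p}) = \delta(G_{n,p})$ w.h.p. Consequently, if we write $v_1$ for the first vertex to arrive, the algorithm outputs a cut of size exactly $d(v_1)$, and it suffices to compare $d(v_1)$ with $\delta(G_{n,p})$. Note that since the distribution of $G_{n,p}$ is invariant under relabelling of vertices, $v_1$ has the same marginal degree distribution as any other vertex, so no special argument is needed for the random arrival order.

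Next, for each vertex $v$ the degree $d(v)$ is a sum of $n-1$ independent Bernoulli$(p)$ variables with mean $\mu = (n-1)p = \omega(\log n)$. Fix a slowly vanishing sequence $\varepsilon_n = o(1)$ for which $\varepsilon_n^2 \mu = \omega(\log n)$; such a sequence exists precisely because $\mu/\log n \to \infty$ (one concrete choice is $\varepsilon_n = (\log n / \mu)^{1/4}$). A standard multiplicative Chernoff bound then gives
\begin{align*}
\mathbb{P}\bigl[\,|d(v) - \mu| > \varepsilon_n \mu\,\bigr] \le 2\exp\bigl(-\Omega(\varepsilon_n^2 \mu)\bigr) = o(1/n).
\end{align*}
Taking a union bound over all $n$ vertices, with high probability every vertex degree lies in the interval $[(1-\varepsilon_n)\mu,\,(1+\varepsilon_n)\mu]$. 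In particular $\delta(G_{n,p}) \ge (1-\varepsilon_n)\mu$ and simultaneously $d(v_1) \le (1+\varepsilon_n)\mu$.

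Combining these estimates with $\lambda(G_{n,p}) = \delta(G_{n,p})$, the ratio of the algorithm's output to the optimal cut is at most
\begin{align*}
\frac{d(v_1)}{\lambda(G_{n,p})} \le \frac{(1+\varepsilon_n)\mu}{(1-\varepsilon_n)\mu} = 1+o(1),
\end{align*}
which is bounded by $2$ for all sufficiently large $n$, completing the proof. The only mildly delicate step is the choice of $\varepsilon_n$: we need $\varepsilon_n \to 0$ (to control the ratio) while $\varepsilon_n^2 \mu$ grows fast enough to beat $\log n$ in the union bound; the assumption $p = \omega(\log n /n)$ is exactly what provides the slack needed to make both demands compatible.
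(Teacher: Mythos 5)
Your proof is correct and follows essentially the same route as the paper: reduce to comparing $d(v_1)$ with $\delta(G_{n,p})$ via the w.h.p.\ identity $\lambda(G_{n,p})=\delta(G_{n,p})$, then apply Chernoff bounds to the binomial degrees. The only difference is cosmetic — you derive the lower bound on $\delta(G_{n,p})$ yourself by a union bound over all vertices (obtaining the sharper ratio $1+o(1)$), whereas the paper cites a known minimum-degree result and settles for a constant factor.
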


\begin{proof}
It is well known that if $\frac{np}{\log n} \to \infty$ as $n \to \infty$ then  w.h.p. $\delta(G_{n,p}) \ge \epsilon np$ for some constant $\epsilon > 0$. See for example Theorem 3.4 in \cite{frieze2016introduction}.
\ifx false
\begin{itemize}
    \item[Case 1:] When $G_{n,p}$ is sparse. Here we assume $\frac{np}{\log n} \to \infty$ as $n \to \infty$. It is known that [ref] in this case for almost every $G_{n,p}$, w.h.p. $\delta(G_{n,p}) \ge \epsilon np$ for some constant $\epsilon > 0$.
    \item[Case 2:] When $G$ is dense. Here we assume $0 < p < 1$ is a constant. For this case it is known that, w.h.p.,
    \begin{align}
        \delta(G_{n,p}) \ge n(1-p) - o(n)
    \end{align}
\end{itemize}
\fi
Since the expected degree of a random vertex is Binomially distributed with mean $np$ using Chernoff bounds we can show that $\mathbf{A}(G_{n,p}) \le \delta np$ for some constant $\delta > 0$, w.h.p.
\end{proof}

\noindent Note that when $p = O(\log n / n)$ then the graph has an isolated vertex w.h.p.


\ifx false
\section{Adversarial Input and Advice Complexity}

\noindent Next we investigate if having access to an advice oracle is helpful for this problem.


\subsection{Advice Complexity of $\mathsf{MinCut}$}

 In theorem ~\ref{thm:up n-1} and theorem ~\ref{thm:low n-5} we give upper and lower bounds for the advice complexity of competitive algorithms.

\begin{theorem} \label{thm:up n-1}
There is an competitive algorithm that finds a minimum cut with $n-1$ bits of advice.
\end{theorem}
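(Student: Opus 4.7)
The plan is to design an advice-assisted online algorithm that maintains an \emph{extendability invariant}: after processing the first $i$ vertices into a partial partition $(X_i,Y_i)$ of $\{v_1,\dots,v_i\}$, there exists an optimal partition $(X^*,Y^*)$ of $V(G)$ with $X_i \subseteq X^*$ and $Y_i \subseteq Y^*$. If the invariant holds at $i=n$, then $(X_n,Y_n)$ is itself optimal, so the algorithm is competitive (in fact, strictly $1$-competitive with $d=0$).

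First I would handle the initialization without spending advice. By the symmetry between the two sides of the cut, we may assume $v_1 \in X$; so set $X_1=\{v_1\}$, $Y_1=\emptyset$. The invariant holds at $i=1$: fix any optimal $(X^*,Y^*)$ and, if necessary, swap its two parts so that $v_1 \in X^*$. Next, inductively, suppose the invariant holds at step $i-1$, witnessed by some optimal $(X^*,Y^*)$. When $v_i$ arrives, exactly one of the two augmented partial partitions
\[
(X_{i-1}\cup\{v_i\},\,Y_{i-1}) \quad \text{or} \quad (X_{i-1},\,Y_{i-1}\cup\{v_i\})
\]
is consistent with $(X^*,Y^*)$, depending on whether $v_i\in X^*$ or $v_i\in Y^*$. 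So at least one of the two options is extendable. The algorithm queries the oracle with a single bit telling it which augmentation preserves the invariant (equivalently, which of the two partial partitions is extendable), and commits to that choice.

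Counting: we spend zero bits on $v_1$ and exactly one bit on each of $v_2,\ldots,v_n$, for a total of $n-1$ bits. Since the invariant is preserved at every step, $(X_n,Y_n)$ equals some optimal $(X^*,Y^*)$, so the algorithm outputs a minimum cut. The semantic content of each advice bit is simply the answer to ``is $(X_{i-1}\cup\{v_i\},Y_{i-1})$ extendable?'', which is a well-defined yes/no question about $(G,\pi)$ that the oracle (who sees the whole input) can answer.

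The only subtle point is the extendability invariant itself, but it is essentially immediate: extendability is a statement about the existence of \emph{some} completion to an optimum, and moving to step $i$ merely conditions on which side the fixed witness $(X^*,Y^*)$ places $v_i$. There is no genuine obstacle; the main thing to be careful about is that the bit-count argument uses the free initial placement of $v_1$ by symmetry, and that correctness does not rely on $G$ being connected (the definition of cut, plus the convention $opt(G)=\infty$ when $|V(G)|=1$, make the argument go through uniformly for all $G$ with $|V(G)|\ge 2$).
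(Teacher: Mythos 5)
Your proposal is correct and follows essentially the same approach as the paper: place $v_1$ in $X$ for free by symmetry, then for each of $v_2,\ldots,v_n$ spend one advice bit answering whether $(X_{i-1}\cup\{v_i\},Y_{i-1})$ is extendable, yielding $n-1$ bits and an optimal cut. Your explicit verification of the extendability invariant is a slight elaboration of what the paper leaves implicit, but the algorithm and the counting are identical.
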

\begin{proof}
Let $\mathbf A^{adv}$ define $X_1=\{v_1\}$ and $Y_1=\emptyset$ when it receives $v_1$. For each $i=2,...,n$, suppose $X_{i-1}$ and $Y_{i-1}$ have been constructed. When $v_i$ is revealed $\mathbf A^{adv}$ asks: is $(X_{i-1}\cup \{v_i\},Y_{i-1})$ extendable? If the answer is yes then set $X_i=X_{i-1}\cup \{v_i\}$ and $Y_i=Y_{i-1}$; if the answer is no then set $X_i=X_{i-1}$ and $Y_i=Y_{i-1}\cup \{v_i\}$. At the end, $\mathbf A^{adv}$ finds an optimal solution with $n-1$ bits of advice.
\end{proof}

Note that in this case answer to each question is a 1-bit answer.
Hence there are no overhead due to self-delimiting strings.
The algorithm correctly determines a minimum cut even if the graph $G$ is disconnected.

\begin{figure}[h]
	\includegraphics[width=6cm]{./Figures/fig-lb-mincut.eps}
	\centering
	\caption{The class $\mathcal{G}_{(n)}$ used in the proof of theorem ~\ref{thm:low n-5} }
\label{fig:opt-adv}
\end{figure}

\begin{theorem}\label{thm:low n-5}
There is a collection $\mathcal{G}_{(n)}$ of graphs such that any competitive algorithm solving $\mathsf{MinCut}[\mathcal{G}_{(n)}]$ requires at least $n-5$ bits of advice.
\end{theorem}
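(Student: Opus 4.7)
The plan is to exhibit a family of inputs that are indistinguishable on an early prefix yet have pairwise different optimal cuts; a standard information-theoretic argument will then force the advice to encode which instance is in play.

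First I would construct the gadget. Fix a path $P = x_1 x_2 x_3 x_4$. The remaining $n-4$ vertices form a set $U$, and each $u \in U$ is attached either to $\{x_1,x_2\}$ (placing $u$ in a set $S$) or to $\{x_3,x_4\}$ (placing $u$ in $T$); there are no other edges. In every such $G$ the partition $(S \cup \{x_1,x_2\},\ T \cup \{x_3,x_4\})$ induces a cut consisting of just the single edge $x_2 x_3$, so $opt(G) = 1$.

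Next I would have the adversary reveal the $n-4$ vertices of $U$ first. Because $G[U]$ is an independent set, the algorithm sees only isolated vertices and has no structural information distinguishing $S$ from $T$. Thus the set $\Gamma_{n-4}$ of potential completions has size $2^{n-4}$, and since swapping the two sides of a cut is a symmetry, the number of pairwise distinct optimal partitions is $2^{n-5}$.

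Finally, before seeing $P$ the algorithm must already have committed to a partition of $U$, and its choice is a deterministic function of the prefix (identical across all instances) and the advice string. With $b$ bits of advice there are at most $2^b$ possible prefix commitments, so if $2^b < 2^{n-5}$ some commitment is shared by two instances whose optimal partitions differ, and on at least one of them the algorithm outputs a non-optimal cut. This yields $b \ge n - 5$.

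The main subtlety is verifying that a wrong commitment is actually punished sharply: each misplaced $u \in U$ contributes two new crossing edges (its two edges into $\{x_1,x_2\}$ or $\{x_3,x_4\}$), so $k$ misplacements yield a cut of size at least $1 + 2k$ against $opt(G) = 1$. Under the paper's 1-competitive-with-additive-$d$ definition this rules out only commitments differing from the truth by more than $d/2$ positions, so a covering-style refinement would give $n - 5 - O(\log n)$; the clean $n-5$ bound holds exactly when the algorithm is required to output a true minimum cut, matching Theorem~\ref{thm:up n-1} up to a single bit.
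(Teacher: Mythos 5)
Your proof follows essentially the same route as the paper's: the same path-plus-two-independent-sets gadget, the same adversary order revealing $S\cup T$ first, the same count of $2^{n-5}$ pairwise distinct optimal partitions after quotienting by the side-swap symmetry, and the same pigeonhole/information-theoretic conclusion. Your closing observation about the additive constant $d$ is a genuine subtlety the paper's own proof elides (it silently slides from ``competitive'' to ``optimal''), and your remark that the clean $n-5$ bound holds exactly for algorithms required to output a true minimum cut, degrading to $n-5-O(\log n)$ under the additive-$d$ reading, is correct.
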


\begin{proof}
For every $n \ge 6$ we present a graph for which a competitive algorithm requires at least $n-5$ bits of advice.
Each graph $G$ in the collection has path a $P= (x_1,x_2,x_3,x_4)$ of length 4 (see Figure~\ref{fig:opt-adv}). 
Additionally, all other vertices of $G$ are divided into two parts $S$ and $T$.
Each vertex in $S$ is adjacent to both $x_1,x_2$ and each vertex in $T$ is adjacent to both $x_3,x_4$.
There are no other edges in $G$.
Suppose the adversary first reveals the vertices in $S \cup T$.
The induced subgraph graph $G[S \cup T]$ forms an independent set.
Let $\Gamma_{n-4}$ be the set of potential graphs remain after processing the set $\{v_1,v_2,\ldots,v_{n-4}\}$.
First we show $|\Gamma_{n-4}| = 2^{n-4}$.
This follows from the fact that the set $\{v_1,v_2,\ldots,v_{n-4}\}$ can be partitioned in $2^{n-4}$ different ways depending on which vertices (if any) are adjacent to $\{x_1, x_2\}$.
Since the labels $S$ and $T$ are interchangeable there are exactly $|\Gamma_{n-4}|/2 = 2^{n-5}$ pairwise distinct optimal solutions in $\Gamma_{n-4}$. 

An optimal algorithm, without advice, must be able distinguish between these pairwise distinct solutions before the path $P$ is revealed. By the standard information theoretic argument we see that  $\ge n-5$ advice bits are necessary to solve  $\mathsf{MinCut}[\mathcal{G}_{(n)}]$ optimally. 
\end{proof}

Next we ask the question how much advice is necessary and sufficient to approximate the value of the mincut value. 
Theorem ~\ref{thm: classic} gives a $O(n/k)$-competitive algorithm even without advice whenever $k \ge 1$.
However, with only $O(\log n + \log \log n)$ bits of advice we can achieve a $\delta(G) / k$-competitive algorithm.
Here $\delta(G)$ is the minimum degree of $G$.
At the beginning we ask the oracle the position of a vertex with the minimum degree, which requires $O(\log n + \log \log n)$.
The $\log \log n$ term correspond to the extra bits used to make the advice string self-delimiting.
The algorithm puts this vertex in one part and all other vertices into the other part resulting in a cut of size $\delta(G)$.
Unfortunately, if  $\delta(G) =  O(n)$ then it is no better than the algorithm without advice.
In the next theorem we show that this is essentially the best one can do.

\begin{theorem}\label{thm: adv approx low}
Let $\mathbf A^{adv}$ be a $c$-competitive algorithm for $\mathsf{MinCut}[{\cal G}_{k}]$ where $1 \le k \le \lfloor\frac{n-4}{2}\rfloor$.
For every $\frac{k+1}{n} < \epsilon < \frac{1}{2}(1-\frac{1}{n})$, if
$\mathbf A^{adv}$ uses $b < n - 2\lceil \epsilon n \rceil-1 $ bits of advice 
then  $c \ge  \frac{\epsilon n-1}{k}$.
\end{theorem}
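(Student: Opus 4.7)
My plan is to mimic the lower-bound construction from Theorem~\ref{thm:low n-5} but extended to an approximation setting, so that the adversary can encode exponentially many distinct optimal partitions into an initial independent-set prefix. Concretely, I would build a family $\mathcal{G}_{(n)}$ whose graphs look like Figure~\ref{fig:opt-adv-app}: take two disjoint cliques $A,B$ of size $p := \lceil \epsilon n\rceil$, glue them together with a fixed gadget on $A' \subseteq A$, $B'\subseteq B$ carrying exactly $k$ crossing edges (so that the unique optimal partition is $(A\cup C, B\cup D)$ with cut value $k$), and add two independent sets $C,D$ of total size $n-2p$ where each vertex of $C$ is joined to all of $A$ and each vertex of $D$ is joined to all of $B$. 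Within this family, the connections of vertices in $C\cup D$ to $A\cup B$ are the only degree of freedom, so the family has $2^{n-2p}$ members, and quotienting by the $X\leftrightarrow Y$ symmetry gives $2^{n-2p-1}$ graphs whose optimal partitions are pairwise distinct.

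Next I would exploit the adversary's ability to choose the arrival order. The adversary first reveals the $n-2p$ vertices of $C\cup D$; since $C\cup D$ is an independent set, the algorithm sees no information distinguishing one graph of $\mathcal{G}_{(n)}$ from another until $A\cup B$ arrives. The algorithm's behaviour on this prefix is determined by the $b$ bits of advice. With $b < n-2\lceil \epsilon n\rceil -1 = n-2p-1$, a counting/pigeonhole step yields an advice string $\phi$ shared by at least two graphs in the family whose optimal solutions disagree on $C\cup D$. I would let the algorithm commit to one partition of $C\cup D$ under $\phi$, and then let the adversary reveal $A\cup B$ consistent with a graph of $\mathcal{G}_{(n)}$ whose optimal partition of $C\cup D$ differs from the one chosen.

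The core of the argument is then to lower-bound the cost incurred by any such ``wrong'' commitment. I would argue that if the final partition $(X',Y')$ does not agree with the optimum $(A\cup C, B\cup D)$ up to symmetry, then at least one of the following happens: some clique vertex of $A$ is separated from the bulk of $A$, or some vertex of $C$ is placed on the $B$-side. In either case at least $p-1 = \lceil\epsilon n\rceil-1$ edges are cut (the degree of a clique vertex inside its clique, or the $p$ edges from a misplaced $C$-vertex to $A$). Hence $\mathbf A^{adv}(G,\pi) \ge p-1$ while $\mathrm{opt}(G)=k$, forcing
\[
c \ \ge\ \frac{p-1}{k}\ \ge\ \frac{\epsilon n - 1}{k},
\]
after absorbing the additive constant $d$ from the definition of $c$-competitiveness by taking $n$ sufficiently large (which the class $\mathcal{G}_{(n)}$ guarantees).

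I expect the main obstacles to be twofold. First, setting up the gadget $(A',B')$ so that the minimum cut of every graph in $\mathcal{G}_{(n)}$ is exactly $k$ and is uniquely realised by $(A\cup C, B\cup D)$, so that the $2^{n-2p-1}$ optimal solutions really are pairwise distinct; the hypothesis $k \le \lfloor (n-4)/2\rfloor$ and $\epsilon > (k+1)/n$ should be precisely what is needed to make the construction feasible (enough room for $A',B'$ inside cliques of size $\lceil\epsilon n\rceil$). Second, carefully verifying the ``any wrong commitment costs $\ge p-1$'' step; this is where I would be most careful, since one has to check that misplacing even a single vertex forces a cut of linear size, not just a handful of extra edges.
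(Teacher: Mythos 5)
Your proposal is correct and follows essentially the same route as the paper: the identical two-cliques-plus-independent-sets construction of Figure~\ref{fig:opt-adv-app} with $p=\lceil\epsilon n\rceil$, the same count of $2^{n-2p-1}$ pairwise distinct optimal partitions revealed only after the independent prefix $C\cup D$, the same pigeonhole step over advice strings, and the same $\lambda(X',Y')\ge p-1$ cost bound yielding $c\ge\frac{\epsilon n-1}{k}$. The only (welcome) addition is your explicit handling of the additive constant $d$ by letting $n$ grow within $\mathcal{G}_{(n)}$, which the paper leaves implicit.
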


\begin{figure}[h]
	\includegraphics[width=7cm]{./Figures/fig-mincut-adv.eps}
	\centering
	\caption{A graph $G \in \mathcal{G}_{(n)}$ used in the proof of theorem ~\ref{thm: adv approx low}}
\label{fig:opt-adv-app}
\end{figure}

\begin{proof}
We will show that there is an infinite family $\mathcal{G}_{(n)}$ of graphs for which the theorem holds.
Consider a graph $G \in \mathcal{G}_{(n)}$ as shown in Figure~\ref{fig:opt-adv-app}. 
The induced subgraphs $G[A]$ and $G[B]$ are both cliques of size $p > k+1$.
We connect $A$ and $B$ via the sets $A'$ and $B'$.
Since the minimum cut is $k$ we ensure $E(A',B') = k$.
The induced subgraphs $G[C]$ and $G[D]$ are both independent sets and $|C| + |D| \ge 2$.
Each vertex in $C$ (resp. $D$) is adjacent to all vertices in $A$ (resp. $B$).
The adversary sends the vertices in the set $C \cup D$ before sending any of the vertices in $A \cup B$.
Let $\Gamma_{C \cup D}$ be the set of potential graphs 
after $G[C \cup D]$ has been revealed.
Depending on how the vertices in $C \cup D$ are connected to $A \cup B$ there are $2^{|C| + |D|-1} = 2^{n-2p-1}$ pairwise different optimal solutions with a minimum cut of $k$ corresponding to the set $\Gamma_{C \cup D}$.
This is essentially the same argument we used when proving theorem ~\ref{thm:low n-5}.
With $b$ bits of advice there are only $2^b$ possible advice strings.
Hence there exists some advice string $\phi$ which is read by $\mathbf A^{adv}$ for at least $2^{n-2p-1} / 2^b$ pairwise different optimal solutions. 
Let this set be $\mathcal S$.
If $|\mathcal S| > 1$ then the adversary can fool $\mathbf A^{adv}$ in choosing a non-optimal solution.
Suppose after reading $\phi$, $\mathbf A^{adv}$ chooses a partition of $C \cup D$ according to a solution $(X',Y')$ (aka a partition of G) in $\mathcal{S}$.
Then adversary sends the rest of $G$ (aka the vertices in $A \cup B$) according to some other partition $(X'',Y'') \in {\mathcal S}$.
Since $\mathbf A^{adv}$ has no means of distinguishing these to case based on the advice string $\phi$ it will fail to optimally partition $G$.
It is easy to see that  for any non-optimal partition $(X',Y') \ne (X,Y)$ of $G$ we have $\lambda(X',Y') \ge p-1 = \frac{p-1}{k}opt(G)$.
Thus we must have $|\mathcal{S}| \le 1$, which implies $n-2p-1-b \le 0$.
Taking $p = \lceil \epsilon n \rceil$ we see $b \ge n - 2\lceil \epsilon n \rceil-1 $ if $\mathbf A^{adv}$ to be less than $c$-competitive. 
\end{proof}

In some respect, the minimum cut problem captures a limitation of the advice  model. 
Unless we have sufficient amount of advice (say $b$) to produce an optimal solution, having any amount of advice significantly less that that, say $\le \epsilon b$ is essentially useless and a similar competitiveness can be achieved without any advice.
 Results in this section can be  extended to the weighted version be multiplying the factor $\frac{w_{max}}{w_{min}}$ in the competitiveness term.
\fi 

\section{MinCut Under Non-stationary Regret}
In Section 2 we see that, apart from some special cases,   \mc does not exhibit competitive algorithms.
So we focus our attention to maintaining a cut as close to the minimum as possible under a (discrete) time varying weight function.
We re-state our problem in the framework of regret analysis where we consider the non-stationary case. To the best of our knowledge this has not been done before.

At the beginning we are given $G_0(V, E)$ and a dummy weight function $w_0: E \to \mathbb R^+$. 
At time step $t$ the online algorithm must chooses a cut based on the knowledge of the weight functions observed thus far.
Once the algorithm plays a cut the adversary reveals $w_t$.
The goal of the algorithm is to minimize the non-stationary regret as defined in section 1.2. 

Note that $w_t$'s are linear with cut constraints and $\mathbf{A}$ has full access to the preceding sequence.
However, without any variational bound the regret cannot be sublinear.
For example consider the graph $P_3$ with two edges $e_1, e_2$.
Consider the two weight functions $w_A, w_B$. 
Let $w_A(e_1) =  1$ and $w_B(e_2) = 0$.  Let $w_B(e_i) = 1 - w_A(e_i)$, $i \in \{1,2\}$.
At time step $t$ the adversary chooses either $w_A$ or $w_B$ with probability 0.5.
Clearly, regardless of what strategy $\mathbf{A}$ plays the expected value of the cut will be at least 0.5. However the optimal cut value is 0 for every $t$.
Hence the regret increases linearly with $T$ in expectation.
Although this is crucial technical reason for us to restrict the variation of the weight functions it make sense in practice as well.
For example if the graph models a communication network it is reasonable to assume that the overall changes to the network traffic is bounded even if some edges may experience significant fluctuations in their traffic during certain periods. 
We model this by assuming that the total variation of the weights are bounded in the following way:
\begin{align}\label{eq: var}
    \mathcal{F}_T = \{(w_1,\ldots,w_T)\mid \sum_{t=1}^{T-1}{||w_t - w_{t+1}||_1 \le V_T}\}
\end{align}
where $||\cdot||_1$ is the Manhattan distance between the successive weight functions. 
Note the distinction between this and that of Equation \ref{eq: variation}.
Here the variational budget is not directly specified in terms of the feedback function, which  gives the value of the cut corresponding to a  set of feasible edges.
This makes sense in the context of graphs.
The feedback (loss function) is the cut function $C_t(X) = w_t(X, V\setminus X)$, the weight of the edges crossing the two parts $X$ and $Y = V\setminus X$.
The regret function is given by:
\begin{align}\label{eq:rand non  reg}
    \eregn(\mathbf{A}) = \mathbb{E}\left[\sum_{t=1}^T{C_t(X_t)}- \sum_{t=1}^T {C_t(X_t^*)}\right]
\end{align}

\begin{algorithm}[h!]
\caption{ The follow the current optimal ($\mathsf{FTCO}$) algorithm}
\begin{algorithmic}[1]
\STATE {\bf Input:} A graph $G(V, E)$. At time step $t$ a weight function $w_t$ satisfying Equation \ref{eq: var}.
\STATE {\bf Output:} A sequence of cuts of $G$.
\STATE Initialize a dummy weight function $w_0 : E \to \{1\}$.
\STATE $t \leftarrow 1$
\COMMENT{Let $G_t$ be the graph corresponding to the weight function $w_t$}
\WHILE{there is a new weight function}
\STATE Find a minimum cuts of $G_{t-1}$. Let this correspond to the partition $(X_t, Y_t)$
\STATE Play this cut at step $t$.
\STATE $t \leftarrow t+1$
\ENDWHILE
\end{algorithmic}
\label{alg: FTCO}
\end{algorithm}

\begin{theorem}{(Upper bound)}
$\mathsf{FTCO}$ has a regret of $O(V_T)$.
\end{theorem}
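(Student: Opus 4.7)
The plan is to show, in a step-by-step telescoping argument, that the per-round regret of $\mathsf{FTCO}$ is bounded by the $\ell_1$ change of the weight function between consecutive rounds, and then sum. Since $\mathsf{FTCO}$ is deterministic, the expectation is trivial and it suffices to bound the quantity pointwise.

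Fix a round $t$ and let $(X_t,Y_t)$ be the partition played by $\mathsf{FTCO}$, namely a minimum cut for $w_{t-1}$, and let $(X_t^*,Y_t^*)$ be an optimal cut for $w_t$. I would write
\begin{align*}
C_t(X_t)-C_t(X_t^*) &= \bigl[w_t(X_t,Y_t)-w_{t-1}(X_t,Y_t)\bigr] \\
&\quad+\bigl[w_{t-1}(X_t,Y_t)-w_{t-1}(X_t^*,Y_t^*)\bigr] \\
&\quad+\bigl[w_{t-1}(X_t^*,Y_t^*)-w_t(X_t^*,Y_t^*)\bigr].
\end{align*}
The middle bracket is $\le 0$ by the optimality of $(X_t,Y_t)$ for $w_{t-1}$. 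For the first and third brackets, I would observe that for any partition $(X,Y)$,
\[
\bigl|w_t(X,Y)-w_{t-1}(X,Y)\bigr| \;=\; \Bigl|\sum_{e\in E(X,Y)}(w_t(e)-w_{t-1}(e))\Bigr|\;\le\;\sum_{e\in E}|w_t(e)-w_{t-1}(e)|\;=\;\|w_t-w_{t-1}\|_1.
\]
Combining, $C_t(X_t)-C_t(X_t^*)\le 2\,\|w_t-w_{t-1}\|_1$.

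Summing over $t=1,\dots,T$ gives
\[
\eregn(\mathsf{FTCO}) \;\le\; 2\sum_{t=1}^{T}\|w_t-w_{t-1}\|_1 \;=\; 2\|w_1-w_0\|_1 + 2\sum_{t=2}^{T}\|w_t-w_{t-1}\|_1 \;\le\; 2\|w_1-w_0\|_1 + 2V_T,
\]
where the last inequality uses the variational bound in Equation \ref{eq: var}. The first term $\|w_1-w_0\|_1$ involves only the fixed graph $G$ and the predefined dummy weight $w_0\equiv 1$, so it is a constant independent of $T$. Therefore $\eregn(\mathsf{FTCO})=O(V_T)$ whenever $V_T=\Omega(1)$, and is $O(1)$ otherwise, which in either case is $O(V_T)$ up to an additive constant absorbed into the big-$O$.

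I do not expect a real obstacle here: the only subtle point is making sure the $t=1$ term does not silently depend on $T$, which is handled by noting that $w_0$ is a specified part of the algorithm (all-ones on the fixed edge set $E$), so $\|w_1-w_0\|_1\le |E|+\|w_1\|_1$ is an input-dependent but round-independent quantity. The argument is essentially the standard ``Be-The-Leader'' style analysis applied to the linear loss $C_t(X)=\langle w_t,\mathbf{1}_{E(X,Y)}\rangle$ over the discrete feasible set of cut indicator vectors, exploiting that the loss is 1-Lipschitz in $w_t$ with respect to the $\ell_1$ norm regardless of which cut is chosen.
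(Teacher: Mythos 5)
Your proof is correct, and it reaches the stated $O(V_T)$ bound by a genuinely different bookkeeping than the paper's. The paper uses the exact identity $C_t(X_t)=C_{t-1}(X_{t-1}^*)+\Delta C_t$ (valid because $\mathsf{FTCO}$ plays precisely $X_{t-1}^*$) and then telescopes $\sum_{t=1}^T\bigl(C_{t-1}(X_{t-1}^*)-C_t(X_t^*)\bigr)=C_0(X_0^*)-C_T(X_T^*)$, so the regret is at most an $O(n)$ boundary term plus $\sum_t\|w_{t-1}-w_t\|_1$; this yields constant $1$ in front of $V_T$ but requires separately observing that $C_0(X_0^*)$ is bounded under the all-ones dummy weights. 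You instead bound each round in isolation, sandwiching $C_t(X_t)-C_t(X_t^*)$ between two weight-perturbation terms and one nonpositive optimality term, giving $2\|w_t-w_{t-1}\|_1$ per round. You pay a factor of $2$ on $V_T$, but in exchange you get an anytime (prefix) guarantee, you never need to telescope the sequence of optimal values, and you only use that $X_t$ is at least as good as $X_t^*$ under $w_{t-1}$ rather than its exact optimality. Both arguments share the same boundary subtlety: the budget in Equation \ref{eq: var} runs over $t=1,\dots,T-1$, so $\|w_1-w_0\|_1$ is not covered by $V_T$; the paper silently absorbs it by writing $\sum_{t=1}^T\|w_{t-1}-w_t\|_1\le V_T$ (which is not literally implied by the stated budget), whereas you flag the term explicitly and argue it is a single round-independent additive quantity. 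That term does depend on the adversary's choice of $w_1$ and not only on $G$, but this is an artifact of the model shared by both proofs rather than a gap in yours.
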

\begin{proof}
We need to bound the right hand side of the expression in Equation \ref{eq:rand non  reg}. Since $\mathsf{FTCO}$ plays a minimum cut of the previous step we have:
\begin{align*}
 C_t(X_{t}) = C_{t}(X_{t-1}^*) = C_{t-1}(X_{t-1}^*) + \Delta C_t  
\end{align*}
Where $\Delta C_t$ is the change in the weight of the cut due to change in the weight function going from step $t-1$ to the step $t$.
Using this expression for $C_t(X_t)$ in Equation \ref{eq:rand non  reg} we see that:
\begin{align*}
    \eregn(\mathbf{A}) = & \sum_{t=1}^T{C_{t-1}(X_{t-1}^*) + \Delta C_t}- \sum_{t=1}^T {C_t(X_t^*)} = \sum_{t=1}^T{C_{t-1}(X_{t-1}^*)-C_t(X_t^*)} + \sum_{t=1}^T\Delta C_t\\
    = & C_0(X_0^*) - C_T(X_T^*) + \sum_{t=1}^T\Delta C_t
\end{align*}
Since the quantity $C_0(X_0^*) - C_T(X_T^*) \le O(n)$ is independent of $T$ we only need to bound the summation over $\Delta C_t$'s.
This can be done easily:
\begin{align*}
    \sum_{t=1}^T\Delta C_t = \sum_{t=1}^T\sum_{e \in E(X_{t-1}^*, Y_{t-1}^*)}{|w_{t-1}(e) - w_t(e)|} \le \sum_{t=1}^T{||w_{t-1}-w_{t}||_1} \le V_T 
\end{align*}
This proves the claim of the theorem.
\end{proof}
Note that our algorithm is deterministic.
The following lower bound shows that $\mathsf{FTCO}$ is in fact optimal in our model even if randomization is allowed.
\begin{theorem}{(Lower bound)}
Under the above model for any randomized algorithm $\mathbf{A}$ there is graph and a sequence of weight functions such that $\eregn(\mathbf{A}) = \Omega(V_T)$
\end{theorem}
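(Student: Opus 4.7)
The plan is to establish the lower bound by invoking Yao's minimax principle: it suffices to construct a distribution over oblivious adversarial instances on which every deterministic algorithm has expected non-stationary regret $\Omega(V_T)$, since then the same bound transfers to any randomized algorithm (via the expectation being over both the random bits and the input distribution). The instance will be built on the tiny graph $P_3$ already used in the opening example of Section~3, with edges $e_1,e_2$; I will re-use the two weight functions $w_A,w_B$ defined there (symmetric $0/1$ weights), whose optima are distinct cuts and each attains value $0$.

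The key idea is to partition the horizon $[1,T]$ into $K=\lfloor V_T/2\rfloor$ contiguous phases of equal length $\lfloor T/K\rfloor$. For each phase $i$, flip an independent fair coin $\sigma_i\in\{A,B\}$ and set $w_t=w_{\sigma_i}$ for every step $t$ in phase $i$. First I would verify the budget: $w_t$ is constant within a phase and differs by exactly $\|w_A-w_B\|_1=2$ across consecutive phases, so the total variation is at most $2(K-1)\le V_T$, making this sequence a legal member of $\mathcal{F}_T$. Second, for each phase the optimal cut achieves cost $0$ throughout, so $\sum_t C_t(X_t^*)=0$.

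Next I would bound the algorithm's cost from below. At the very first step of any phase $i$, the deterministic algorithm's cut $X_t$ is a function of $w_1,\ldots,w_{t-1}$ alone, hence is independent of $\sigma_i$; since $\sigma_i$ is a fresh fair coin, and the two choices $w_A,w_B$ punish the two candidate cuts oppositely (each with cost $1$), the expected cost at that first step is exactly $1/2$. Summing over the $K$ phases and ignoring non-negative contributions from the remaining steps gives expected regret $\ge K/2=\Omega(V_T)$. Yao's principle then upgrades this to any randomized $\mathbf{A}$: for every such $\mathbf{A}$ there is a realization of $(\sigma_1,\ldots,\sigma_K)$ producing $\eregn(\mathbf{A})=\Omega(V_T)$.

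The only subtle step is the ``first-step'' regret computation, and this is the one I would write carefully: it relies on (i) the algorithm committing to $X_t$ before seeing $w_t$, which is the model assumption, and (ii) the independence of $\sigma_i$ from the algorithm's history, which is forced by the product structure of the coin flips across phases. All other pieces --- the budget accounting, the identification of the offline optimum, and the Yao reduction --- are routine. The construction also makes clear the tightness with the upper bound: $\mathsf{FTCO}$ will happen to pick the correct cut starting from the second step of each phase, and both the lower and upper arguments localize essentially one unit of regret to each weight change.
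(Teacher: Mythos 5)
Your proof is correct, but it takes a genuinely different route from the paper's. The paper works on a path $P_n$ and has the adversary, at \emph{every} round, pick one edge uniformly at random and lower its weight to $1-\epsilon_t$; since the algorithm commits to a cut before seeing $w_t$ and cannot guess the perturbed edge, it loses roughly $\epsilon_t$ per round, and summing gives $\Omega(V_T)$ because the per-round budget consumption is about $2\epsilon_t$. Your construction instead concentrates the variation at $K=\lfloor V_T/2\rfloor$ phase boundaries on the fixed graph $P_3$, flips an independent coin per phase between $w_A=(1,0)$ and $w_B=(0,1)$, and extracts $1/2$ expected regret from the first step of each phase; this is the standard phase-based template for non-stationary lower bounds (\`a la Besbes et al.) and it buys you two things the paper's argument lacks: the averaging step from a distribution over inputs to a single worst-case sequence is made explicit (the paper folds the adversary's randomness into $\eregn$ without comment), and you avoid the paper's somewhat delicate claim that the uniform distribution over edges is the algorithm's best response (the paper's $1/n^2$ factor there is not justified carefully, and for single-edge cuts the natural probability would be $1/(n-1)$). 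What the paper's version buys in exchange is that the weights change at every round by an arbitrarily small amount, so the bound is seen to hold even for "smoothly" varying sequences rather than piecewise-constant ones. Two small points to tidy up in your write-up: you should note the implicit requirement $K\le T$ (i.e.\ $V_T\le 2T$), without which no construction can give $\Omega(V_T)$ since per-round losses are bounded; and when you say the two weight functions "punish the two candidate cuts oppositely," you should also account for the third cut of $P_3$ (both edges), which costs $1$ under either weight function --- this only helps, but it is needed to assert that \emph{every} cut has expected cost at least $1/2$.
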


\begin{proof}
Consider a path $P_n$. At time $t$ the adversary picks one edge uniformly at random and assigns it the weight $1-\epsilon_t \in [0, 1]$. All other edges have weight 1.
So that the variational budget is 
\begin{align}\label{eq: vt}
V_T = \sum_{t=1}^{T-1}{||w_t-w_{t+1}||_1}  \le \sum_{t = 1}^{T-1}{\epsilon_{t}+ \epsilon_{t+1}} \le 2\sum_{t=1}^T\epsilon_t
\end{align}

Hence at time $t$ the minimum cut $C_t(X^*_t) = 1-\epsilon_t$ and for all $X \not \in \argmin_{X \in 2^{V}-\emptyset} C_t(X)$, $C_t(X) \ge 1$.
Now consider the situation encountered by our algorithm $\mathbf{A}$.
We assume  $\mathbf{A}$ and the adversary know each others (mix) strategy.
Even with this information and knowing in advance the structure of the graph the best $\mathbf{A}$ can do is pick an edge uniformly at random and play it as the cut edge.
Choice of any other distribution will only make the adversaries job easier. 
For example, in one extreme case if $\mathbf{A}$ always picks one particular edge then adversary can keep the weight of that edge $\ge 1+c$, for any constant $c> 0$.
Thus we conclude that, $\eregn(\mathbf{A})$ is minimized if $\mathbf{A}$ plays a cut as described above.
In this case we have,

\begin{align*}\label{eq:rand non  reg}
    \eregn(\mathbf{A}) = & \mathbb{E}\left[\sum_{t=1}^T{C_t(X_t)}- \sum_{t=1}^T {C_t(X_t^*)}\right] = \sum_{t=1}^T{\mathbb{E}[C_t(X_t)]} - \sum_{t=1}^T(1-\epsilon_t)\\
    \ge & \sum_{t=1}^T\left(\frac{1}{n^2}(1-\epsilon_t)+\left(1-\frac{1}{n^2}\right)\cdot 1\right) - \sum_{t=1}^T(1-\epsilon_t)\\
    = & \left(1-\frac{1}{n^2}\right)\sum_{t=1}^T\epsilon_t \ge \left(1-\frac{1}{n^2}\right) \frac{V_T}{2}
\end{align*}
Where the first inequality follows from the fact that uniform distribution minimizes the expected cut. And the last one by substituting $V_T$ from Equation \ref{eq: vt}.
\end{proof}


\section{Greedy property of online MinCut and MaxCut}
As we have discussed in section 1.1.3, the performance of $\min_\pi \mathbf A(G,\pi)$ could serve as a measure on the complexity of an online problem $\mathsf P$. In this section we will study $\min_\pi \mathbf A(G,\pi)$ for online mincut and maxcut problems. In both cases, we establish that there exists $\mathbf A$ satisfying $\min_\pi \mathbf A(G,\pi)=opt(G)$ for all $G$. In addition, we obtain an analogous result for maximizing a submodular function and we refute the existence of such a result for minimizing a submodular function. 

  For the current discussion, we allow parallel edges but not loops in $G$. This is the same as allowing a nonnegative weight $w$ on edges and measuring the size of a cut $C$ by the total weight $\sum\{w(e): e\in C\}$.
For any disjoint $X,Y\subseteq V$, let $|X,Y|$ denote the number of edges of $G$ between $X$ and $Y$. We will write $|x,Y|$ or $|X,y|$ for $|X,Y|$ if $X=\{x\}$ or $Y=\{y\}$, respectively. If $E(X,Y)$ is a minimum or maximum cut of $G$ for a partition $(X,Y)$ of $V$ then we may simply call $(X,Y)$ is a minimum, maximum cut of $G$, respectively. If $U\subseteq V$ then we use $G[U]$ to denote the subgraph of $G$ induced on $U$.
 
We will consider a greedy type algorithm $\mathbf A$. Let $\pi =v_1, v_2, ..., v_n$ be a permutation of $V$. Let $X,Y$ be the partition determined by $\mathbf A$ during the process. Since $n$ is unknown to the algorithm, $\mathbf A$ has to place $v_1\in X$ and $v_2\in Y$, because $\mathbf A$ needs to ensure $X\ne\emptyset\ne Y$ even when $n=2$. In the $i$th iteration ($i\ge3$), vertex $v_i$ is revealed and $\mathbf A$ need to decide if $v_i$ should go to $X$ or $Y$. A simple greedy strategy is to make the choice depending on $f_X$ and $f_Y$, which are the number of edges from $v_i$ to $X$ and $Y$, respectively. In the mincut problem, $v_i$ goes to $X$ if $f_X>f_Y$, while in the maxcut problem, $v_i$ goes to $X$ if $f_X<f_Y$. When $f_X=f_Y$, $\mathbf A$ needs to have a tie breaking rule to decide where $v_i$ should go. 

Such a greedy strategy is a common sense approach. The difficulty in studying such an algorithm is to come up with a simple tie breaking rule. It turns out that letting $v_i$ go with $v_{i-1}$ will make things work. To be more specific, in case $f_X=f_Y$, then $v_i$ goes to $X$ if $v_{i-1}$ went to $X$, and $v_i$ goes to $Y$ if $v_{i-1}$ went to $Y$. Let $\mathbf A_{\min}$ and $\mathbf A_{\max}$ be our greedy algorithms with this tie breading rule for mincut and maxcut problems, respectively. For every graph $G$ with two or more vertices, we constructed two permutations $\pi_*$ and $\pi^*$ such that $\mathbf A_{\min}(G,\pi_*)$ is a minimum cut of $G$, and $\mathbf A_{\max}(G,\pi^*)$ is a maximum cut of $G$. To achieve this, we need the following graph theoretical result.

\begin{restatable}{theorem}{gmincut}\label{lem:mincut}
Every loopless graph $G=(V,E)$ has a minimum cut $(X,Y)$ for which there exists a permutation $v_1, ..., v_n$ of $V$ such that the following conditions are satisfied. For each $i\ge1$, let $X_i=X\cap\{v_1,...,v_i\}$ and $Y_i=Y\cap \{v_1,..., v_i\}$. \\ 
(i) $v_1\in X$ and $v_2\in Y$. \\ 
(ii) For every $i\ge3$, if $v_i\in X$ then $|v_i,X_{i-1}|\ge |v_i,Y_{i-1}|$ and if $v_i\in Y$ then $|v_i,Y_{i-1}|\ge |v_i,X_{i-1}|$. \\ 
(iii) If $i\ge3$ is minimum with $v_i\in X$ then $i=|Y|+2$ and $|v_i,X_{i-1}|>|v_i,Y_{i-1}|$. 
\end{restatable}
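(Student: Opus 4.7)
The plan is to construct both the minimum cut $(X, Y)$ and the permutation $\pi$ directly. Among all minimum cuts of $G$, I would select one minimizing $|X|$ (WLOG $|X| \le |Y|$). The target permutation has the structural form $v_1 \in X$, followed by all of $Y$, followed by the remaining vertices of $X$; this automatically yields $i = |Y|+2$ for the first position in $X$ after $v_1$, so the ``$i = |Y|+2$'' clause of (iii) is free. What remains is to verify the local adjacency inequalities in (ii) and the strict inequality of (iii) via a greedy extension whose feasibility at each step follows from minimality of the cut.

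The delicate ingredient is the strict inequality in (iii), which (when $|X| \ge 2$) requires a pair $v_1, x_1 \in X$ with $|v_1, x_1| > |x_1, Y|$. I would establish this via the following sublemma: \emph{if $(X, Y)$ is a minimum cut with $|X| \ge 2$ and no distinct pair $v, x \in X$ satisfies $|v, x| > |x, Y|$, then some singleton $\{z\} \subset X$ already induces a minimum cut.} Summing $|v, x| \le |x, Y|$ over all ordered pairs $v \ne x$ in $X$ gives $2|E(G[X])| \le (|X|-1)\,|X, Y|$. On the other hand, if every singleton cut $|z, V \setminus \{z\}|$ strictly exceeded $|X, Y|$, then summing $|z, X \setminus \{z\}| + |z, Y| > |X, Y|$ over $z \in X$ would yield $2|E(G[X])| > (|X|-1)\,|X, Y|$, a contradiction. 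Hence some singleton $\{z\} \subset X$ has cut value $\le |X, Y|$ and is therefore itself a minimum cut. Because our $(X, Y)$ was chosen to minimize $|X|$ among minimum cuts, this alternative is excluded whenever $|X| \ge 2$, so the pair $v_1, x_1$ must exist. In the degenerate case $|X| = 1$, condition (iii) is vacuous.

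With $v_1$ fixed (and $x_1$ prescribed when $|X| \ge 2$), I would list $Y = \{u_1, u_2, \ldots, u_{|Y|}\}$ and then $X \setminus \{v_1\} = \{x_1, x_2, \ldots\}$ by greedy extension: at each step, pick any remaining vertex on the appropriate side whose adjacency to the already-placed vertices on its own side dominates its adjacency to the opposite (frozen) side. For the $Y$-phase at step $j \ge 2$, applying the min-cut inequality to the partition $(\{u_1, \ldots, u_{j-1}\},\, V \setminus \{u_1, \ldots, u_{j-1}\})$ gives $|\{u_1, \ldots, u_{j-1}\},\, Y \setminus \{u_1, \ldots, u_{j-1}\}| \ge |v_1,\, Y \setminus \{u_1, \ldots, u_{j-1}\}|$; averaging the pointwise inequality $|u, \{u_1, \ldots, u_{j-1}\}| < |u, v_1|$ over $u \in Y \setminus \{u_1, \ldots, u_{j-1}\}$ would contradict this, so some valid $u_j$ exists. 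The $X$-phase is symmetric and produces $|x_j,\, \{v_1, x_1, \ldots, x_{j-1}\}| \ge |x_j, Y|$. The main obstacle is the sublemma of the previous paragraph: the double-counting must carefully balance the ordered-pair sums, the failed-pair hypothesis, and the minimality of $|X|$ to force the contradiction. Once that is in hand, the two greedy extensions and the final verification of (i)--(iii) become routine.
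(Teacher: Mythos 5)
Your proposal is correct and follows essentially the same route as the paper's proof: choose a minimum cut minimizing $|X|$, order the vertices as $v_1\in X$, then all of $Y$, then the rest of $X$, obtain the strict-inequality pair for (iii) from a counting argument whose failure would produce a singleton minimum cut (contradicting the minimality of $|X|$), and justify each greedy step by comparing the partial set's cut against $|X,Y|$. The only cosmetic differences are that you average over all of $X$ in the pair lemma where the paper fixes a single vertex $x$ and derives $|x,V\backslash x|\le|X,Y|$ directly, and you run the two greedy extensions directly on $G$ where the paper contracts one side and reduces to its $|X|=1$ claim.
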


\begin{proof}
Let us choose a minimum cut $(X,Y)$ with $|X|$ as small as possible. We prove that, with respect to this partition $(X,Y)$, there exists a permutation satisfying (i-iii). 

\noindent Claim 1. If $|X|=1$ then the desired permutation exists.

Let $v_1$ be the unique member of $X$ and let $v_2$ be an arbitrary vertex of $Y$. We prove that there is a desired permutation starting with the two specified terms $v_1,v_2$. Note that no matter how the permutation $v_3,...,v_n$ is determined, conditions (i) and (iii) are always satisfied. So when we define $v_3,...,v_n$ we only need to ensure condition (ii), which is equivalent to: for each $i\ge 3$, $|v_i,\{v_2,...,v_{i-1}\}|\ge |v_i,v_1|$ holds. 

We define permutation $v_3,...,v_n$ inductively. Suppose terms $v_2,...,v_{i-1}$ have been selected, where $3\le i\le n$. Let $Y'=\{v_2,...,v_{i-1}\}$. We prove that there exists a vertex in $Y\backslash Y'$, which we call $v_i$, such that $|v_i,Y'|\ge |v_i,v_1|$. Suppose otherwise that $|y,Y'|< |y,v_1|$ holds for all $y\in Y\backslash Y'$. Then $|Y\backslash Y', Y'|<|Y\backslash Y',v_1|$, implying $|Y',V\backslash Y'|=|Y',v_1|+|Y',Y\backslash Y'|<|Y',v_1|+|Y\backslash Y',v_1|=|X,Y|$, a contradiction. Thus $v_i$ can be selected, and this proves Claim 1.

\noindent Claim 2. If $|X|>1$ then there exist distinct $x,x'\in X$ with $|x',x|>|x',Y|$.

Suppose this is not the case. Then $|x',x|\le|x',Y|$ holds for all distinct $x,x'\in X$. Consequently, for any fixed $x\in X$, we have $|X\backslash x, x|\le |X\backslash x, Y|$, which implies $|x,V\backslash x|=|x,Y|+|x,X\backslash x|\le |x,Y|+ |X\backslash x, Y|=|X,Y|$. This contradicts the minimality of $|X|$ and thus Claim 2 is proved. 

Now we are ready to construct the required permutation for the case $|X|>1$. Let $x,x'$ be chosen as in Claim 2. Let $v_1=x$ and $v_{|Y|+2}=x'$. We will make vertices of $Y$ (specified below) $v_2,...,v_{|Y|+1}$ and vertices of $X$ (also specified below) $v_1,v_{|Y|+2},...,v_{|V|}$. Note that under this arrangement, conditions (i) and (iii) are satisfied. 

To determine a permutation of $Y$ we consider $G'$ obtained from $G$ by contracting $X$ into a single vertex, which we denote by $x^*$. Then $(\{x^*\},Y)$ is a minimum cut of $G'$. By Claim 1, vertices of $Y$ can be permuted to satisfy (ii). Note that satisfying (ii) in $G'$ and satisfying (ii) in $G$ are the same thing for vertices of $Y$. So we have obtained a required permutation for $Y$. 
Similarly, to determine a permutation of $X'=X\backslash \{x,x'\}$ we consider $G''$ obtained from $G$ by contracting $Y$ into a single vertex $y^*$ and also contracting $\{x,x'\}$ into a single vertex $x^*$. Again, $(X'\cup\{x^*\}, \{y^*\})$ is a minimum cut of $G''$. By Claim 1, vertices of $G''$ can be permuted to satisfy (ii), where if $u_1,....,u_{|X|}$ is the permutation of $V(G'')$ then $u_1=y^*$ and $u_2=x^*$ (as shown in the proof of Claim 1). It follows that setting $v_{|Y|+i}=u_i$ ($3\le i\le |X|$) results in a permutation of $V$ that satisfies (ii).
\end{proof}

\noindent This theorem suggests the tie-breading rule ({\bf R}) we mentioned above: 

({\bf R}) \ If $f_X=f_Y$ then $v_i$ goes to where $v_{i-1}$ went.

\noindent This is equivalent to the following rule.

({\bf R}) \ If $f_X=f_Y$ and $|X|=1$ then $Y=Y\cup\{v_i\}$; if  $f_X=f_Y$ and $|X|>1$ then $X=X\cup\{v_i\}$. 

\noindent Now we can formally describe our Greedy Algorithm \ref{alg: greedy min}.

\begin{algorithm}[h!]
\caption{ A Greedy proto-Algorithm $\mathsf{MinCut}$}
\begin{algorithmic}[1]
\STATE {\bf Input:} A graph $G$.
\STATE {\bf Output:} A cut of $G$.
\STATE Initialize: $X \leftarrow \{v_1\}$ , $Y \leftarrow \{v_2\}$ and $i \leftarrow 0$.
\WHILE{$2 < i \le n$}
\STATE $f_X = |v_i, X|$ and $f_Y = |v_i, Y|$
\IF{$f_X > f_Y$}
\STATE $X \leftarrow X \cup \{v_i\}$
\ELSIF{$f_X < f_Y$}
\STATE $Y \leftarrow Y \cup \{v_i\}$
\ELSE
\STATE Using tie breaker ({\bf R}) to decide if $X \leftarrow X \cup \{v_i\}$ or $Y \leftarrow Y \cup \{v_i\}$.\\ This decision is based on $G[\{v_1,\ldots,v_i\}]$
\ENDIF 
\ENDWHILE
\STATE $i \leftarrow i+1$
\end{algorithmic}
\label{alg: greedy min}
\end{algorithm}

\noindent Then the following is an immediate consequence of Theorem \ref{lem:mincut}

\begin{restatable}{theorem}{gmincutmore}\label{thm:greedyminc}
For the online mincut problem there exists a greedy algorithm $\mathbf A$ with the following property. For every loopless graph $G$ there exists a permutation of $V(G)$ such that when taking this permutation as its input $\mathbf A$ produces a minimum cut. 
\end{restatable}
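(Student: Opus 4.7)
The plan is to take $\mathbf{A}$ to be the greedy Algorithm~\ref{alg: greedy min} with tie-breaking rule (\textbf{R}), and to verify that, for the permutation $\pi_*$ produced by Theorem~\ref{lem:mincut}, running $\mathbf{A}$ on $\pi_*$ reconstructs the minimum cut $(X,Y)$ of Theorem~\ref{lem:mincut} vertex-by-vertex.

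First I would unpack the structure of $\pi_*$. Condition (i) places $v_1 \in X$ and $v_2 \in Y$, while condition (iii) asserts that the smallest $i \ge 3$ with $v_i \in X$ is exactly $i = |Y|+2$. Together these force $\pi_*$ to consist of three blocks: $v_1$ alone in $X$, then $v_2, \dots, v_{|Y|+1}$ exhausting $Y$, then $v_{|Y|+2}, \dots, v_n$ exhausting $X \setminus \{v_1\}$. In particular, within each of the two latter blocks, every vertex has its immediate predecessor on the same side of $(X,Y)$; the only cross-block transition is at $i = |Y|+2$.

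Next I would do induction on $i$ showing that after processing $v_i$, the partition $(X^{\mathbf A}_i, Y^{\mathbf A}_i)$ maintained by $\mathbf{A}$ coincides with $(X \cap \{v_1,\dots,v_i\},\, Y \cap \{v_1,\dots,v_i\})$. The base cases $i = 1, 2$ follow from initialization. For $3 \le i \le |Y|+1$, condition (ii) gives $|v_i, Y_{i-1}| \ge |v_i, X_{i-1}|$; a strict inequality sends $v_i$ to $Y$ via the greedy branch, and an equality invokes rule (\textbf{R}), which again sends $v_i$ to $Y$ because $v_{i-1} \in Y$. For the boundary index $i = |Y|+2$, condition (iii) makes the inequality $|v_i, X_{i-1}| > |v_i, Y_{i-1}|$ strict, so $\mathbf{A}$ correctly switches to $X$ without invoking the tie-breaker. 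For $i \ge |Y|+3$, condition (ii) gives $|v_i, X_{i-1}| \ge |v_i, Y_{i-1}|$, and either the greedy branch or rule (\textbf{R}) (in which case $v_{i-1} \in X$) places $v_i$ in $X$.

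The only subtle point, which I expect to be the crux, is that the tie-breaker must not misfire at the cross-block transition $i = |Y|+2$, where $v_{i-1}$ lies in the opposite part from $v_i$. This is precisely why condition (iii) of Theorem~\ref{lem:mincut} is formulated with a \emph{strict} inequality rather than a weak one: it guarantees that at the unique index where $v_{i-1}$ and $v_i$ disagree, no tie occurs and the greedy comparison alone dictates the correct choice. Combined with the block structure imposed by (i) and (iii), this ensures the tie-breaker is only consulted when $v_{i-1}$ and $v_i$ lie on the same side, so running $\mathbf{A}$ on $\pi_*$ outputs exactly $(X,Y)$, which is a minimum cut by Theorem~\ref{lem:mincut}.
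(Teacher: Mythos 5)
Your proposal is correct and takes exactly the approach the paper intends: the paper dispatches this theorem as an ``immediate consequence'' of Theorem~\ref{lem:mincut}, and your induction on $i$ (using the block structure forced by (i) and (iii), condition (ii) for the non-strict comparisons, and the strict inequality in (iii) to protect the single cross-block step from the tie-breaker) is precisely the verification being left implicit. Your observation that the strictness in (iii) exists to keep rule (\textbf{R}) from misfiring at $i=|Y|+2$ is the right reading of why the lemma is stated the way it is.
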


\noindent To establish a similar result for the online $\mathsf{MaxCut}$ problem we need the following theorem.

\begin{restatable}{theorem}{gmaxcut}\label{thm:maxcutper}
Every loopless graph $G=(V,E)$ has a maximum cut $(X,Y)$ for which there exists a permutation $v_1,...,v_n$ of $V$ such that the following conditions are satisfied. For each $i\ge1$, let $X_i=X\cap\{v_1,...,v_i\}$ and $Y_i=Y\cap \{v_1,...,v_i\}$.\\ 
(i) $v_1\in X$ and $v_2\in Y$.\\ 
(ii) For every $i\ge 3$, if $v_i\in X$ then $|v_i,X_{i-1}|\le |v_i,Y_{i-1}|$ and if $v_i\in Y$ then $|v_i,Y_{i-1}|\le |v_i,X_{i-1}|$.\\ 
(iii) If $i\ge3$ and $|v_i,Y_{i-1}|= |v_i,X_{i-1}|$ then either $\{v_{i-1},v_i\}\subseteq X$ or $\{v_{i-1},v_i\}\subseteq Y$.
\end{restatable}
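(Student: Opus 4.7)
The plan is to fix an arbitrary maximum cut $(X, Y)$, pick $v_1 \in X$ and $v_2 \in Y$ to satisfy (i), and extend the permutation one vertex at a time by a greedy rule, allowing at most one revision of $(X, Y)$ if necessary. At step $i \ge 3$, let $X^r = X \setminus X_{i-1}$ and $Y^r = Y \setminus Y_{i-1}$. Assume $v_{i-1} \in X$ (the other case is symmetric); the goal is to find $v_i$ satisfying either (a) $v_i \in X^r$ with $|v_i, X_{i-1}| \le |v_i, Y_{i-1}|$, or (b) $v_i \in Y^r$ with $|v_i, Y_{i-1}| < |v_i, X_{i-1}|$. In (b) the inequality must be strict, because $v_{i-1}$ and $v_i$ would sit on different sides and condition (iii) forbids equality in that situation.

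The existence of such $v_i$ (after a possible one-time revision of $(X, Y)$) follows from a swap argument. If neither (a) nor (b) is achievable, then $|v, X_{i-1}| > |v, Y_{i-1}|$ for every $v \in X^r$ and $|v, Y_{i-1}| \ge |v, X_{i-1}|$ for every $v \in Y^r$. Comparing $(X, Y)$ with $(X_{i-1} \cup Y^r,\ Y_{i-1} \cup X^r)$ yields
\begin{align*}
|X_{i-1} \cup Y^r,\ Y_{i-1} \cup X^r| \;=\; |X, Y| \;+\; \sum_{v \in X^r}\bigl(|v, X_{i-1}| - |v, Y_{i-1}|\bigr) \;+\; \sum_{v \in Y^r}\bigl(|v, Y_{i-1}| - |v, X_{i-1}|\bigr),
\end{align*}
which strictly exceeds $|X, Y|$ whenever $X^r \ne \emptyset$, contradicting maximality. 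So $X^r = \emptyset$, the second sum must also vanish by maximality, and $|v, X_{i-1}| = |v, Y_{i-1}|$ for every $v \in Y^r$.

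In this stuck situation, replace $(X, Y)$ by $(X', Y') := (X \cup Y^r,\ Y_{i-1})$, which is still a maximum cut. The prefix remains compatible with $(X', Y')$ because $X'_{j-1} = X_{j-1}$ and $Y'_{j-1} = Y_{j-1}$ for every $j \le i$. Using $|v, X| = |v, Y_{i-1}|$ (which follows from $X_{i-1} = X$), the max-cut bound $|v, Y| \le |v, X|$, and the identity $|v, Y| = |v, Y_{i-1}| + |v, Y^r \setminus v|$, one forces $|v, Y^r \setminus v| = 0$ for every $v \in Y^r$, so $Y^r$ is an independent set in $G$. Placing the remaining vertices of $X'^r = Y^r$ in any order then completes the permutation: independence yields $|v_j, X'_{j-1}| = |v_j, X_{i-1}| = |v_j, Y_{i-1}| = |v_j, Y'_{j-1}|$ for every $j \ge i$, and $v_{j-1}, v_j$ both lie in $X'$, so (ii) and (iii) both hold. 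The mirror stuck case ($v_{i-1} \in Y$, $Y^r = \emptyset$, all of $X^r$ tied) is handled by the symmetric swap to $(X_{i-1},\ Y \cup X^r)$.

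The main obstacle is the stuck case itself: proving that the one-time swap yields a genuine alternative maximum cut, that the already-constructed prefix still satisfies (i)-(iii) under the new cut, and that the tail completes without another obstruction. The crucial structural fact that makes this work is the independence of $Y^r$ in the stuck configuration, which emerges from combining the max-cut inequality with the tied equality $|v, X_{i-1}| = |v, Y_{i-1}|$ forced by the swap being non-improving.
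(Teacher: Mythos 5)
Your proof is correct and takes essentially the same route as the paper's: the identical swap comparison with $(X_{i-1}\cup Y^r,\ Y_{i-1}\cup X^r)$ shows that a stuck step forces $X^r=\emptyset$ and all of $Y^r$ to be tied, and the same one-time flip of the tied residual side yields another maximum cut compatible with the prefix. The only cosmetic difference is that the paper packages this as an extremal argument (choose a longest extendable prefix and extend it by one vertex after the flip, contradicting maximality), whereas you close by observing that $Y^r$ is independent and placing the entire tail at once; both closings are valid.
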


\begin{proof}
We first observe that there exists $k\in\{2,...,n\}$ for which there exists a maximum cut $(X,Y)$ and a permutation $v_1,...,v_n$ of $V$ such that (i) is satisfied and (ii-iii) are satisfied for all $i\in\{3,...,k\}$. To see this we only need to take $k=2$ and take any maximum cut $(X,Y)$, any $v_1\in X$, any $v_2\in Y$, and any permutation of $V$ starting with $v_1v_2$. 

Let us choose $k$ as large as possible under the above requirements. To prove the theorem we only need to show that $k=n$. Suppose on the contrary that $k<n$. Without loss of generality, let us assume $v_k\in Y$. If there exists $y\in Y\backslash Y_k$ with $|y,Y_k|\le |y,X_k|$ then setting $v_{k+1}=y$ (with the same maximum cut ($X,Y$)) would contradict the maximality of $k$. So  $|y,Y_k|>|y,X_k|$ holds for all $y\in Y\backslash Y_k$. Similarly, from the maximality of $k$ we deduce that $|x,Y_k|\le |x,X_k|$ holds for all $x\in X\backslash X_k$. Consequently, we must have $Y_k=Y$ and $|X\backslash X_k,X_k|=|X\backslash X_k,Y_k|$ because otherwise $(X_k\cup(Y\backslash Y_k), Y_k\cup(X\backslash X_k))$ would be a cut bigger than $(X,Y)$, a contradiction. But then replacing $(X,Y)$ with $(X_k, Y_k\cup(X\backslash X_k))$ and setting $v_{k+1}=x$ for any $x\in X\backslash X_k$ would contradict the maximality of $k$. Therefore, we must have $k=n$ and thus the theorem is proven.
\end{proof}

\noindent This theorem leads to the following.

\begin{restatable}{theorem}{gmaxcutmore}\label{thm:greedymaxcut}
For online MaxCut there exists a greedy algorithm $\mathbf A$ satisfying the following property. For every loopless graph $G$ there exists a permutation of $V(G)$ such that when taking this permutation as its input $\mathbf A$ produces a maximum cut.
\end{restatable}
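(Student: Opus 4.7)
The plan is to mirror the argument used for $\mathbf A_{\min}$, replacing Theorem \ref{lem:mincut} with Theorem \ref{thm:maxcutper} and flipping the greedy preference. First I would describe the algorithm $\mathbf A_{\max}$ explicitly: it places $v_1 \in X$ and $v_2 \in Y$, and for each $i \ge 3$ computes $f_X = |v_i, X|$ and $f_Y = |v_i, Y|$ and sends $v_i$ to the side in which it currently has \emph{fewer} neighbors, breaking ties by the same rule (\textbf{R}) that sends $v_i$ to the side containing $v_{i-1}$. This is literally Algorithm \ref{alg: greedy min} with the inequalities in lines 6 and 8 interchanged.

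Given any loopless $G$, I would invoke Theorem \ref{thm:maxcutper} to produce a maximum cut $(X^*,Y^*)$ together with a permutation $\pi = v_1, \ldots, v_n$ satisfying (i)--(iii). The goal is to show that when $\mathbf A_{\max}$ is run on $\pi$ its final partition is exactly $(X^*, Y^*)$. I would prove this by induction on $i$, with the invariant that the algorithm's partition of $\{v_1, \ldots, v_i\}$ coincides with $(X_i, Y_i) = (X^* \cap \{v_1, \ldots, v_i\},\, Y^* \cap \{v_1, \ldots, v_i\})$. The base case $i = 2$ is immediate from condition (i). For the inductive step, the invariant guarantees that the values $f_X, f_Y$ computed at step $i$ equal $|v_i, X_{i-1}|$ and $|v_i, Y_{i-1}|$, respectively. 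If $|v_i, X_{i-1}| < |v_i, Y_{i-1}|$, placing $v_i$ in $Y^*$ would violate (ii), so $v_i \in X^*$, which matches the greedy decision; the reverse strict inequality is symmetric. If $|v_i, X_{i-1}| = |v_i, Y_{i-1}|$, condition (iii) asserts that $v_{i-1}$ and $v_i$ belong to the same side of $(X^*, Y^*)$, which is precisely the outcome of tie-breaker (\textbf{R}) together with the invariant applied to $v_{i-1}$.

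At step $n$ the algorithm's output is therefore the maximum cut $(X^*, Y^*)$, and the theorem follows. Essentially all combinatorial content is already absorbed into Theorem \ref{thm:maxcutper}, so this reduces to a routine translation between its permutation-based statement and the operational behavior of $\mathbf A_{\max}$. The only point that could trip one up is a mismatch between (\textbf{R}) and (iii) at ties, but this is a direct syntactic match; the real difficulty lay in proving Theorem \ref{thm:maxcutper} itself, not in deriving Theorem \ref{thm:greedymaxcut} from it.
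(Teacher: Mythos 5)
Your proposal is correct and follows the paper's proof exactly: it runs the greedy algorithm of Algorithm \ref{alg: greedy max} on the permutation supplied by Theorem \ref{thm:maxcutper} and observes that conditions (i)--(iii) force the algorithm to reproduce the maximum cut $(X,Y)$. The only difference is that you spell out the inductive verification that the paper leaves implicit, which is a harmless (indeed welcome) elaboration.
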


\begin{algorithm}[h!]
\caption{ A Greedy Algorithm for $\mathsf{MaxCut}$}
\begin{algorithmic}[1]
\STATE {\bf Input:} A graph $G$.
\STATE {\bf Output:} A cut of $G$.
\STATE Initialize: $X \leftarrow \{v_1\}$ , $Y \leftarrow \{v_2\}$ and $i \leftarrow 0$.
\WHILE{$2 < i \le n$}
\STATE $f_X = |v_i, X|$ and $f_Y = |v_i, Y|$
\IF{$f_X < f_Y$}
\STATE $X \leftarrow X \cup \{v_i\}$
\ELSIF{$f_X > f_Y$}
\STATE $Y \leftarrow Y \cup \{v_i\}$
\ELSE
\STATE Put $v_i$ where $v_{i-1}$ went.
\ENDIF 
\ENDWHILE
\STATE $i \leftarrow i+1$
\end{algorithmic}
\label{alg: greedy max}
\end{algorithm}

\begin{proof}
We consider the greedy algorithm $\mathbf A$ given in Algorithm  \ref{alg: greedy max}.
\noindent To see that $\mathbf A$ satisfies the theorem, for any loopless graph $G$, let partition $(X,Y)$ and permutation $v_1....v_n$ be determined as in Theorem \ref{thm:maxcutper}. Then $\mathbf A$ produces exactly partition $(X,Y)$, which is a maximum cut, as required.
\end{proof}

Therefore, for MinCut and MaxCut problems, we established the existence of an algorithm $\mathbf A$ with $\min_\pi \mathbf A(G,\pi)=opt(G)$ for all $G$.

\subsection{Greedy property of submodular functions}

There are two related problems.  Suppose $G=(V,E)$ is a graph and $f: 2^V\to \mathbb R$ such that $f(X)$ is the number of edges between $X$ and $V\backslash X$. Then it is not difficult to verify that $f$ is a submodular function (defined below). So minimizing and maximizing a submodular function can be considered as a generalization of mincut and maxcut. However, for the corresponding online problems there is a subtle difference. For the online submodular problem, if $\Omega'$ is the set of currently revealed elements, then the algorithm can access  $f(X)$ for all $X$ contained in $\Omega$ (the domain of $f$). In contrast, if $V'$ is the set of currently revealed vertices and if $X\subseteq V'$, the algorithm cannot access  $f(X)$, it can only compute the number of edges between $X$ and $V'\backslash X$. 

Nevertheless, we developed a greedy type algorithm $\mathbf A^*$, which behaves very similar to $\mathbf A_{\min}$ and $\mathbf A_{\max}$. In particular, for every submodular function $f$, we constructed a permutation $\pi$ of $\Omega$ such that $\mathbf A^*(f,\pi)$ is a subset of $\Omega$ that maximizes $f$. In other words, we establish that $\min_\pi \mathbf A^*(f,\pi) = opt(f)$ holds for all submodular functions $f$. Finally, remark that no such $\mathbf A^*$ exists for minimizing a submodular function. 

\subsubsection{Definitions and Preliminaries}

Let $E$ be a finite set. A function $f: 2^E \to \mathbb R$ is called submodular if 
$$f(X)+f(Y)\ge f(X\cap Y) + f(X\cup Y)$$
holds for all $X,Y\subseteq E$. This can also be equivalently defined as 
$$f(X\cup\{y\})+f(X\cup\{z\})\ge f(X) + f(X\cup \{y,z\})$$
holds for all $X\subseteq E$ and all distinct $y,z\in E\backslash X$. The second definition is the same as saying that $f_e(X):=f(X\cup\{e\})-f(X)$ is a non-increasing function on $2^{E\backslash e}$ for all $e\in E$. In other words,
$$f(X\cup\{e\})-f(X) \ge f(Y\cup \{e\})-f(Y)$$
holds for all $e\in E$ and all $X\subseteq Y\subseteq E\backslash e$.

\noindent {\bf Remark}. In some papers like in \cite{chekuri2015streaming,hazan2012online} the following definition of a submodular function is used: 
$$f(X\cup\{e\})-f(X) \ge f(Y\cup \{e\})-f(Y)$$
holds for all $e\in E$ and all $X\subseteq Y\subseteq E$. It is easy to see that this condition is the same as saying $f_e$ is non-increasing on $2^E$ for all $e\in E$. Note that this definition is not what we have above since they have different domains. In fact, it is easy to see that $f$ satisfies this definition if and only if $f$ is submodular {\bf and} non-decreasing (i.e. $f(X)\le f(Y)$ holds for all $X\subseteq Y\subseteq E$).


\subsubsection{Online Submodular Maximization}

We consider the following online model for maximizing a submodular function $f$. First, we assume that $f$ is given by an oracle. That is, for any $X\subseteq E$, obtaining the value of $f(X)$ does not require extra work. The objective of the maximization problem is to find a {\it maximizer} $X$ of $f$, which means that $f(X)=\max\{f(Y): Y\subseteq E\}$. We assume that elements of $E$ are revealed one by one. At each step, when a new element $e$ is revealed, the algorithm has to decide if or not to place $e$ in $X$. This is an irrevocable decision. 
In the following we present an analogue of Theorem \ref{thm:greedymaxcut}. 

\begin{lemma}\label{lem:submod}
Let $f$ be a submodular function on a set $E$. If $X\subseteq E$ is a maximizer of $f$ then \\ 
\indent (i)  $f(Y)\le f(Y\cup\{x\})$ holds for every $x\in X$ and every $Y\subseteq X\backslash x$;\\ 
\indent (ii) if $X$ is a maximal (under inclusion) maximizer of $f$ then $f(Y\cup\{x\})<f(Y)$ holds for all \\ \makebox[33pt]{} $x\in E\backslash X$ and all $Y\subseteq E\backslash x$ with $Y\supseteq X$. 
\end{lemma}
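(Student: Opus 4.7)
The plan is to invoke the diminishing-returns characterization of submodularity stated in the preliminaries: $f(Y\cup\{e\})-f(Y)\ge f(Z\cup\{e\})-f(Z)$ whenever $Y\subseteq Z\subseteq E\setminus\{e\}$. Both parts follow directly from this inequality, with (ii) additionally exploiting that maximality of $X$ promotes a weak inequality to a strict one.

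For part (i), I would fix $x\in X$ and $Y\subseteq X\setminus\{x\}$, and apply diminishing returns with $Y$ as the smaller set and $X\setminus\{x\}$ as the larger set (both lie in $E\setminus\{x\}$). This gives
$$f(Y\cup\{x\})-f(Y) \;\ge\; f(X)-f(X\setminus\{x\}).$$
Since $X$ is a maximizer of $f$, we have $f(X)\ge f(X\setminus\{x\})$, so the right-hand side is non-negative, which yields $f(Y)\le f(Y\cup\{x\})$ as desired.

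For part (ii), I would fix $x\in E\setminus X$ and $Y$ with $X\subseteq Y\subseteq E\setminus\{x\}$. Because $X\subsetneq X\cup\{x\}$ and $X$ is a maximal (under inclusion) maximizer, $X\cup\{x\}$ cannot itself be a maximizer, so $f(X\cup\{x\})<f(X)$ strictly. Applying diminishing returns now with $X$ as the smaller set and $Y$ as the larger set (both in $E\setminus\{x\}$),
$$f(X\cup\{x\})-f(X) \;\ge\; f(Y\cup\{x\})-f(Y).$$
The left-hand side is strictly negative, so the right-hand side is too, giving $f(Y\cup\{x\})<f(Y)$.

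There is essentially no obstacle here beyond bookkeeping: one only has to keep straight which set plays the role of the smaller set in the diminishing-returns inequality, and to observe that in (ii) the strictness is forced by the maximality of $X$ together with $x\notin X$. In particular, no extra case analysis or tight inequality manipulation is needed, since both statements are immediate from the characterization of submodularity recalled just before the lemma.
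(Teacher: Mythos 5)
Your proof is correct and is essentially the paper's argument: the diminishing-returns inequality you invoke, applied to the pairs $(Y, X\setminus\{x\})$ and $(X,Y)$, is exactly the submodular inequality the paper applies to $\{X\setminus x,\, Y\cup\{x\}\}$ and $\{X\cup\{x\},\, Y\}$, merely phrased via the equivalent characterization recalled in the preliminaries and arranged as a direct argument instead of a contradiction. No gap.
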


\begin{proof}
Suppose (i) is false. Then $f(Y)>f(Y\cup\{x\})$ holds for some $x\in X$ and some $Y\subseteq X\backslash x$. From the submodularity of $f$ we deduce that $f(X\backslash x)+f(Y\cup\{x\})\ge f(X)+f(Y)$, which implies $f(X\backslash x)\ge f(X)+f(Y)-f(Y\cup\{x\})>f(X)$, contradicting the maximality of $f(X)$.

Suppose (ii) is false. Then $f(Y)\le f(Y\cup\{x\})$ holds for some $x\in E\backslash X$ and some $Y\subseteq E\backslash x$ with $Y\supseteq X$. Again, by the submodularity of $f$ we have $f(X\cup\{x\})+f(Y)\ge f(X) + f(Y\cup\{x\})$, and thus  $f(X\cup\{x\})\ge f(X)+ f(Y\cup\{x\})-f(Y)\ge f(X)$. This implies that $X\cup\{x\}$ is a maximizer of $f$, contradicting the choice of $X$.
\end{proof}

\begin{theorem} \label{thm:submax}
There exists an online greedy algorithm $\mathbf A$ with the following property. For any submodular function $f$ defined on a finite set $E$, that exists a permutation of $E$ such that when taking this permutation as its input $\mathbf A$ produces a maximizer of $f$.
\end{theorem}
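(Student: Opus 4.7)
The plan is to use a greedy algorithm $\mathbf A$ that, starting from $X=\emptyset$, adds a newly revealed element $e$ to $X$ if and only if $f(X\cup\{e\})\ge f(X)$; that is, we accept $e$ whenever doing so does not strictly decrease the value of $f$, breaking ties in favor of inclusion. This is the natural analogue for submodular functions of the tie-breaking convention used in Algorithm \ref{alg: greedy max} for maxcut.

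To produce a permutation witnessing optimality, I would fix a maximizer $X^{*}$ of $f$ that is maximal under inclusion (such a maximizer exists since $E$ is finite, so we may take any maximizer of largest cardinality) and let $\pi$ list the elements of $X^{*}$ first, in any order, followed by the elements of $E\setminus X^{*}$ in any order. I will then argue by induction that $\mathbf A(f,\pi)=X^{*}$, which is a maximizer of $f$.

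For the first phase, I assume inductively that after processing $v_1,\ldots,v_{i-1}$ the algorithm's current set equals $\{v_1,\ldots,v_{i-1}\}\subseteq X^{*}$. When $v_i\in X^{*}$ is revealed, Lemma \ref{lem:submod}(i) applied with $x=v_i$ and $Y=\{v_1,\ldots,v_{i-1}\}\subseteq X^{*}\setminus\{v_i\}$ gives $f(Y\cup\{v_i\})\ge f(Y)$, so the greedy rule adds $v_i$. After $|X^{*}|$ steps the running set equals $X^{*}$. For the second phase, assume inductively the running set still equals $X^{*}$ when some $v_i\in E\setminus X^{*}$ arrives. Then Lemma \ref{lem:submod}(ii) applied with $x=v_i$ and $Y=X^{*}$ yields the strict inequality $f(X^{*}\cup\{v_i\})<f(X^{*})$, so the greedy rule rejects $v_i$ and the running set remains $X^{*}$. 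Thus $\mathbf A(f,\pi)=X^{*}$ as claimed.

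The key technical point, and where I expect the main delicacy to be, is the use of a \emph{maximal} maximizer rather than an arbitrary one. Without maximality, Lemma \ref{lem:submod} would only guarantee $f(Y\cup\{x\})\le f(Y)$ in the second phase, and ties could drive the greedy rule to accept an outside element, breaking the invariant that the running set is contained in $X^{*}$ and thereby invalidating the use of part (i) in later steps. Choosing $X^{*}$ maximal gives the strict inequality in part (ii) and makes the two-phase induction go through without any further bookkeeping.
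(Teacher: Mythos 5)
Your proposal is correct and follows essentially the same route as the paper: the same greedy rule (accept when $f(X\cup\{e\})\ge f(X)$), the same choice of a maximal maximizer listed first in the permutation, and the same appeal to Lemma \ref{lem:submod}(i) and (ii) for the two phases. The paper states that the conclusion "follows immediately" from the lemma, whereas you spell out the two-phase induction and correctly identify why maximality of the maximizer is needed for the strict inequality in the rejection phase; this is just a more explicit rendering of the same argument.
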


\begin{algorithm}[h!]
\caption{ A greedy algorithm for maximizing a submodular  function online}
\begin{algorithmic}[1]
\STATE {\bf Input:} A submodular function $f$.
\STATE {\bf Output:} A subset of  $E$.
\STATE Initialize: $X \leftarrow \emptyset$ and  $i \leftarrow 0$.
\WHILE{$0 < i \le |E|$}
\IF{$f(X \cup \{e_i\}) \ge f(X)$}
\STATE $X \leftarrow X \cup \{e_i\}$
\ENDIF 
\STATE $i \leftarrow i+1$
\ENDWHILE
\end{algorithmic}
\label{alg: greedy sub}
\end{algorithm}
\begin{proof}
We consider the above greedy algorithm $\mathbf A$ as shown above. Let $e_1,e_2,...$ be the input sequence. 

\ifx false
Step 0: $X_0=\emptyset$. \\ 
\indent ... ... ... \medskip\\ 
\indent Step $i$: set $X_i=X_{i-1}\cup\{e_i\}$ if $f(X_{i-1}\cup\{e_i\})\ge f(X_{i-1})$, and set $X_i=X_{i-1}$ if otherwise.
\fi
\noindent To see that $\mathbf A$ satisfies the requirements, for any submodular function $f$ defined on $E$, let $X\subseteq E$ be a maximal maximizer of $f$. Consider a permutation of $E$ such that its first $|X|$ elements are from $X$. Then the result follows immediately from Lemma \ref{lem:submod}.
\end{proof}

\noindent{\sc Remarks.} 1. We called Theorem \ref{thm:submax} an analogue of Theorem \ref{thm:greedymaxcut} because they both deal with submodular functions and elements of the ground set are received one by one. However, we should point the main difference between them. In the $\mathsf{MaxCut}$ problem, if we use $f$ to denoted submodular function defined on $V(G)$, that is, $f(X)=|X,V\backslash X|$, we can see that at each iteration, we do not really now the values of $f(X)$. Instead, what we have is an approximation of it. \\ 
\indent 2. One may naturally ask for an analogue of Theorem \ref{thm:greedyminc}. But such a result do not exist, as shown by the following example. Consider a function $f$ defined on $E=\{x,y\}$, with $f(\emptyset)=0$, $f(x)=f(y)=1$, and $f(E)=-1$. This function is submodular: to see it we only need to verify inequalities $f(X)+f(Y)\ge f(X\cup Y)+f(X\cap Y)$ for incomparable subsets $X,Y$ of $E$. But this is clear since there is only such ineuqlity $f(x)+f(y)=2\ge -1=f(E)+f(\emptyset)$. \\ 
\indent Observe that $E$ is the unique minimizer for $f$, and upto symmetry, there is only one permutation $xy$ of $E$. However, the values of the corresponding subsets $\emptyset, \{x\}, \{x,y\}$ are $0, 1, -1$. Therefore, any greedy algorithm would return $\emptyset$ as the minimizer, which is not the real minimizer.


\bibliographystyle{plain}
\bibliography{ref}

\renewcommand{\thesubsection}{\Alph{subsection}}

\end{document}